\numberwithin{equation}{section}
\newcommand{\Rmnum}[1]{\expandafter\@slowromancap\romannumeral #1@}
\newtheorem{theorem}{Theorem}[section]
\newtheorem{lemma}[theorem]{Lemma}
\newtheorem{corollary}[theorem]{Corollary}
\theoremstyle{plain}
\newtheorem{definition}[theorem]{Definition}
\newtheorem{remark}[theorem]{Remark}
\newenvironment{proof}{\noindent{\textbf{\emph{Proof.}}}}
\definecolor{mydarkgreen}{RGB}{0,100,0}
\begin{document}

\title{\Large The $\sigma$ hulls of matrix-product codes and related entanglement-assisted quantum error-correcting codes}
\author[1]{{\small{Meng Cao}} \thanks{\footnotesize{Beijing Institute of Mathematical Sciences and Applications, 101408, Beijing, China
(E-mail: mengcaomath@126.com)}
 }}
\renewcommand*{\Affilfont}{\small\it}

\date{}
\maketitle

{\linespread{1.4}{

\vskip -7mm

\noindent {\small{{\bfseries{Abstract:}}
Let $\mathrm{SLAut}(\mathbb{F}_{q}^{n})$ denote the group of all semilinear isometries on $\mathbb{F}_{q}^{n}$, where $q=p^{e}$ is a prime power.
Matrix-product (MP) codes are a class of long classical codes generated by combining several commensurate classical codes with a defining matrix.
We give an explicit formula for calculating the dimension of the $\sigma$ hull of a MP code.
As a result, we give necessary and sufficient conditions for the MP codes to be $\sigma$ dual-containing and $\sigma$ self-orthogonal. We prove that $\mathrm{dim}_{\mathbb{F}_{q}}(\mathrm{Hull}_{\sigma}(\mathcal{C}))=\mathrm{dim}_{\mathbb{F}_{q}}(\mathrm{Hull}_{\sigma}(\mathcal{C}^{\bot_{\sigma}}))$.
We prove that for any integer $h$ with $\mathrm{max}\{0,k_{1}-k_{2}\}\leq h\leq \mathrm{dim}_{\mathbb{F}_{q}}(\mathcal{C}_{1}\cap\mathcal{C}_{2}^{\bot_{\sigma}})$, there exists a linear code $\mathcal{C}_{2,h}$ monomially equivalent to $\mathcal{C}_{2}$ such that $\mathrm{dim}_{\mathbb{F}_{q}}(\mathcal{C}_{1}\cap\mathcal{C}_{2,h}^{\bot_{\sigma}})=h$, where $\mathcal{C}_{i}$ is an $[n,k_{i}]_{q}$ linear code for $i=1,2$.
We show that given an $[n,k,d]_{q}$ linear code $\mathcal{C}$, there exists a monomially equivalent $[n,k,d]_{q}$ linear code $\mathcal{C}_{h}$, whose $\sigma$ dual code has minimum distance $d'$, such that there exist an $[[n,k-h,d;n-k-h]]_{q}$ EAQECC and an $[[n,n-k-h,d';k-h]]_{q}$ EAQECC for every integer $h$ with $0\leq h\leq \mathrm{dim}_{\mathbb{F}_{q}}(\mathrm{Hull}_{\sigma}(\mathcal{C}))$. Based on this result, we present a general construction method for deriving EAQECCs with flexible parameters from MP codes related to $\sigma$ hulls.
}}

\vspace{10pt}

\noindent {\small{{\bfseries{Keywords:}} Matrix-product (MP) codes; $\sigma$ hulls; $\sigma$ dual-containing codes; $\sigma$ self-orthogonal codes; Entanglement-assisted quantum error-correcting codes (EAQECCs)}}

\vspace{6pt}
\noindent {\small{{\bfseries{Mathematics Subject Classification (2010):}} 11T71,  \ \ 81P45, \ \ 81P70, \ \  94B05}}}

\section{Introduction}\label{section1}

Matrix-product (MP) codes (\cite{Blackmore2001Matrix-product,O2002Note}) are a class of long classical codes generated by combining several commensurate
classical codes with a defining matrix. Here, we adopt the methodology proposed by Blackmore and Norton \cite{Blackmore2001Matrix-product} to introduce
MP codes over finite fields.

Let $q=p^{e}$ be a prime power and let $\ell$ be an integer with $0\leq\ell\leq e-1$.
Let $\mathbb{F}_{q}$ be the finite field with $q$ elements and let $\mathbb{F}_{q}^{\ast}:=\mathbb{F}_{q}\backslash\{0\}$, where $q$ is a prime power.
Let $\mathbb{F}_{q}^{k\times t}$ denote the set of all $k\times t$ matrices over $\mathbb{F}_{q}$.
We denote a linear code over $\mathbb{F}_{q}$ with length $n$, dimension $k$, and minimum distance $d$ as $[n,k,d]_q$.
Let $\mathcal{C}_{1},\mathcal{C}_{2},\ldots,\mathcal{C}_{k}$ be linear codes of length $n$ over $\mathbb{F}_{q}$.
Consider a matrix $A=(a_{i,j})\in \mathbb{F}_{q}^{k\times t}$ with $k\leq t$.
We define a \textbf{matrix-product (MP) code} as
$\mathcal{C}(A):=[\mathcal{C}_{1},\mathcal{C}_{2},\ldots,\mathcal{C}_{k}]\cdot A$, which is the set of all matrix-products
$[\mathbf{c}_{1},\mathbf{c}_{2},\ldots,\mathbf{c}_{k}]\cdot A$.
Here, $\mathbf{c}_{i}=(c_{1,i},c_{2,i},\ldots,c_{n,i})^{\top}\in\mathcal{C}_{i}$ for $i=1,2,\ldots,k$.
In this context, $A$ is referred to as the \textbf{defining matrix} of $\mathcal{C}(A)$, and $\mathcal{C}_{1},\mathcal{C}_{2},\ldots,\mathcal{C}_{k}$ are called the \textbf{constituent codes} of $\mathcal{C}(A)$.
A standard codeword $\mathbf{c}=[\mathbf{c}_{1},\mathbf{c}_{2},\ldots,\mathbf{c}_{k}]\cdot A$ of $\mathcal{C}(A)$ is represented by an $n\times t$ matrix as follows:
\begin{align*}
\mathbf{c}=\left(
\begin{array}{cccc}
\sum_{i=1}^k c_{1,i}a_{i,1}& \sum_{i=1}^k c_{1,i}a_{i,2}& \cdots &\sum_{i=1}^k c_{1,i}a_{i,t}\\
\sum_{i=1}^k c_{2,i}a_{i,1}& \sum_{i=1}^k c_{2,i}a_{i,2}& \cdots &\sum_{i=1}^k c_{2,i}a_{i,t}\\
\vdots&\vdots&\ddots&\vdots\\
\sum_{i=1}^k c_{n,i}a_{i,1}& \sum_{i=1}^k c_{n,i}a_{i,2}& \cdots &\sum_{i=1}^k c_{n,i}a_{i,t}\\
\end{array}\right).
\end{align*}
By reading the entries of the $n \times t$ matrix  above in column-major order, any codeword of $\mathcal{C}(A)$ can be represented as a row vector of length $tn$, meaning that
$\mathbf{c}=\big[\sum_{i=1}^k a_{i,1}\mathbf{c}_{i},\sum_{i=1}^k a_{i,2}\mathbf{c}_{i},\ldots,\sum_{i=1}^k a_{i,t}\mathbf{c}_{i}\big]$,
where $\mathbf{c}_{i}=(c_{1,i},c_{2,i},\ldots,c_{n,i})\in\mathcal{C}_{i}$ is a $1\times n$ row vector for $i=1,2,\ldots,k$.
It was shown in \cite[p.480]{Blackmore2001Matrix-product} that
\begin{align}\label{equation1.1}
G:=\left(
\begin{array}{cccc}
a_{1,1}G_{1}& a_{1,2}G_{1}&  \cdots &a_{1,t}G_{1}\\
a_{2,1}G_{2}& a_{2,2}G_{2}& \cdots &a_{2,t}G_{2}\\
\vdots&\vdots&\ddots&\vdots\\
a_{k,1}G_{k}& a_{k,2}G_{k}& \cdots &a_{k,t}G_{k}  \\
\end{array}\right)
\end{align}
is a generator matrix of $C(A)$ if $G_{i}$ is a generator matrix of $\mathcal{C}_{i}$ for $i=1,2,\ldots,k$.

Denote by $\mathrm{SLAut}(\mathbb{F}_{q}^{n})$ the group consisting of all semilinear isometries on $\mathbb{F}_{q}^{n}$.
Each element $\sigma\in\mathrm{SLAut}(\mathbb{F}_{q}^{n})$ can be expressed as $\sigma=(\tau,\pi_{s})$,
which satisfies $\sigma(\mathbf{c})=\pi_{s}(\mathbf{c})M_{\tau}$ for any $\mathbf{c}\in\mathbb{F}_{q}^{n}$.
Here, the permutation $\tau$ in the symmetric group of order $n$ corresponds to an $n\times n$ monomial matrix $M_{\tau}$,
and $\pi_{s}$ denotes an automorphism of $\mathbb{F}_{q}$ with $1\leq s\leq e$ satisfying $\pi_{s}(a)=a^{p^{s}}$ for any $a\in\mathbb{F}_{q}$.
Recently, Carlet \emph{et al}. \cite{Carlet2019On} introduced the concept of $\sigma$ inner product for $\sigma\in\mathrm{SLAut}(\mathbb{F}_{q}^{n})$,
generalizing the Euclidean inner product, Hermitian inner product and $\ell$-Galois inner product.
From the concept of $\sigma$ inner product, it is clear that the $\sigma$ dual $\mathcal{C}^{\bot_{\sigma}}$ of a linear code $\mathcal{C}$
extends the Euclidean dual $\mathcal{C}^{\bot}$, Hermitian dual $\mathcal{C}^{\bot_{H}}$ and $\ell$-Galois dual $\mathcal{C}^{\bot_{\ell}}$ of it.

In this paper, we give an explicit formula for calculating the dimension of the $\sigma$ hull of a MP code.
As a result, we give necessary and sufficient conditions for the MP codes to be $\sigma$ dual-containing and $\sigma$ self-orthogonal. We prove that $\mathrm{dim}_{\mathbb{F}_{q}}(\mathrm{Hull}_{\sigma}(\mathcal{C}))=\mathrm{dim}_{\mathbb{F}_{q}}(\mathrm{Hull}_{\sigma}(\mathcal{C}^{\bot_{\sigma}}))$.
We prove that for any integer $h$ with $\mathrm{max}\{0,k_{1}-k_{2}\}\leq h\leq \mathrm{dim}_{\mathbb{F}_{q}}(\mathcal{C}_{1}\cap\mathcal{C}_{2}^{\bot_{\sigma}})$, there exists a linear code $\mathcal{C}_{2,h}$ monomially equivalent to $\mathcal{C}_{2}$ such that $\mathrm{dim}_{\mathbb{F}_{q}}(\mathcal{C}_{1}\cap\mathcal{C}_{2,h}^{\bot_{\sigma}})=h$, where $\mathcal{C}_{i}$ is an $[n,k_{i}]_{q}$ linear code for $i=1,2$.
We show that given an $[n,k,d]_{q}$ linear code $\mathcal{C}$, there exists a monomially equivalent $[n,k,d]_{q}$ linear code $\mathcal{C}_{h}$, whose $\sigma$ dual code has minimum distance $d'$, such that there exist an $[[n,k-h,d;n-k-h]]_{q}$ EAQECC and an $[[n,n-k-h,d';k-h]]_{q}$ EAQECC for every integer $h$ with $0\leq h\leq \mathrm{dim}_{\mathbb{F}_{q}}(\mathrm{Hull}_{\sigma}(\mathcal{C}))$. Based on this result, we present a general construction method for deriving EAQECCs with flexible parameters from MP codes related to $\sigma$ hulls.
To illustrate the advantages of the EAQECCs constructed using our method, we provide some new EAQECCs that are not covered in recent literature.

\section{Preliminaries}\label{section2}

Denote by $[n,k,d]_{q}$ a classical linear code over $\mathbb{F}_{q}$ with length $n$, dimension $k$ and minimum distance $d$.
The minimum distance $d$ of a linear code satisfies the well-known Singleton bound $d\leq n-k+1$.
If $d=n-k+1$, then such a linear code is called a \textbf{maximum distance separable (MDS) code}.

Following the notations introduced in \cite{Carlet2019On},
we call a mapping $\sigma: \mathbb{F}_{q}^{n}\rightarrow\mathbb{F}_{q}^{n}$ an \textbf{isometry} if $d_{H}\big(\sigma(\mathbf{u}),\sigma(\mathbf{v})\big)$
$=d_{H}(\mathbf{u},\mathbf{v})$
for any $\mathbf{u},\mathbf{v}\in\mathbb{F}_{q}^{n}$,
where $d_{H}(\mathbf{u},\mathbf{v})$ is the Hamming distance of $\mathbf{u}$ and $\mathbf{v}$.
All isometries on $\mathbb{F}_{q}^{n}$ form a group, which is denoted by $\mathrm{Aut}(\mathbb{F}_{q}^{n})$.
Two codes $\mathcal{C}$ and $\mathcal{C}'$ are called \textbf{isometric} if $\sigma(\mathcal{C})=\mathcal{C}'$ for some $\sigma\in\mathrm{Aut}(\mathbb{F}_{q}^{n})$.
If an isometry is linear, then it is called a linear isometry.
Let $\mathrm{MAut}(\mathbb{F}_{q}^{n})$ denote the \textbf{monomial group} consisting of
the set of transforms with $n\times n$ monomial matrices over $\mathbb{F}_{q}$.
Notice that the group of all linear isometries of $\mathbb{F}_{q}^{n}$ corresponds to $\mathrm{MAut}(\mathbb{F}_{q}^{n})$.

For isometries that map subspaces onto subspaces, it is shown in \cite{Betten2006Error,Sendrier2013The}
that when $n\geq 3$, such isometries are exactly the semilinear mappings of the form
{\setlength\abovedisplayskip{0.15cm}
\setlength\belowdisplayskip{0.15cm}
\begin{align*}
\begin{split}
\sigma=(\tau,\pi):\ &\mathbb{F}_{q}^{n}\longrightarrow\mathbb{F}_{q}^{n}\\
&\ \ \mathbf{c}\longmapsto\tau\big(\pi(\mathbf{c})\big)
\end{split}
\end{align*}}with $\pi(\mathbf{c}):=\big(\pi(c_{1}),\pi(c_{2}),\ldots,\pi(c_{n})\big)$ for $\mathbf{c}=(c_{1},c_{2},\ldots,c_{n})\in\mathbb{F}_{q}^{n}$,
where $\tau$ is a linear isometry and $\pi$ is an automorphism of $\mathbb{F}_{q}$.
We denote by $\mathrm{SLAut}(\mathbb{F}_{q}^{n})$ the group of all semilinear isometries on $\mathbb{F}_{q}^{n}$.

\begin{remark}\label{remark2.1}
For any $\sigma=(\tau,\pi)\in \mathrm{SLAut}(\mathbb{F}_{q}^{n})$, the above definition implies that there exists an $n\times n$ monomial matrix $M_{\tau}$
over $\mathbb{F}_{q}$ which corresponds to $\tau$ such that
{\setlength\abovedisplayskip{0.15cm}
\setlength\belowdisplayskip{0.15cm}
\begin{equation}\label{eq2.1}
\sigma(\mathbf{c})=\tau\big(\pi(\mathbf{c})\big)=\pi(\mathbf{c})M_{\tau}
\end{equation}}for any $\mathbf{c}\in\mathbb{F}_{q}^{n}$.
Here, we can write the monomial matrix $M_{\tau}$ as $M_{\tau}=D_{\tau}P_{\tau}$, where $D_{\tau}$ is an invertible diagonal matrix,
and $P_{\tau}$ is a permutation matrix under $\tau$ such that the $\tau(i)$-th row of $P_{\tau}$ is exactly the $i$-th row of the identity matrix $I_{n}$
for $i=1,2,\ldots,n$. Consequently, it is clear that
{\setlength\abovedisplayskip{0.15cm}
\setlength\belowdisplayskip{0.15cm}
\begin{equation*}
(t_{1},t_{2},\ldots,t_{n})P_{\tau}=\big(t_{\tau(1)},t_{\tau(2)},\ldots,t_{\tau(n)}\big),
\end{equation*}
\begin{equation}\label{eq2.2}
(t_{1},t_{2},\ldots,t_{n})P_{\tau}^{T}=\big(t_{\tau^{-1}(1)},t_{\tau^{-1}(2)},\ldots,t_{\tau^{-1}(n)}\big).
\end{equation}}
\end{remark}

Carlet \emph{et al}. \cite{Carlet2019On} introduced the following concepts on the $\sigma$ inner product and $\sigma$ dual of linear codes.

\begin{definition}\label{definition2.2}
(\!\!\cite{Carlet2019On})
Let $q=p^{e}$ be a prime power and let $\mathcal{C}$ be a linear code of length $n$ over $\mathbb{F}_{q}$. For $\sigma=(\tau,\pi)\in\mathrm{SLAut}(\mathbb{F}_{q}^{n})$, the \emph{$\sigma$ inner product} of
$\mathbf{a}\in\mathbb{F}_{q}^{n}$ and $\mathbf{b}\in\mathbb{F}_{q}^{n}$ is defined as
{\setlength\abovedisplayskip{0.15cm}
\setlength\belowdisplayskip{0.15cm}
\begin{equation*}
\langle\mathbf{a},\mathbf{b}\rangle_{\sigma}=\sum_{i=1}^{n}a_{i}c_{i},
\end{equation*}}where $\mathbf{a}=(a_{1},a_{2},\ldots,a_{n})$ and $\sigma(\mathbf{b})=(c_{1},c_{2},\ldots,c_{n})$.
The \emph{$\sigma$ dual} of $\mathcal{C}$ is defined as
{\setlength\abovedisplayskip{0.15cm}
\setlength\belowdisplayskip{0.15cm}
\begin{equation*}
\mathcal{C}^{\bot_{\sigma}}=\{\mathbf{a}\in\mathbb{F}_{q}^{n}|\ \langle\mathbf{a},\mathbf{b}\rangle_{\sigma}=0 \ \mathrm{for} \ \mathrm{all} \  \mathbf{b}\in\mathcal{C}\}.
\end{equation*}}
\end{definition}

Denote $\sigma(\mathcal{C}):=\{\sigma(\mathbf{c})|\ \mathbf{c}\in\mathcal{C}\}$.
Then $\mathcal{C}^{\bot_{\sigma}}=\big(\sigma(\mathcal{C})\big)^{\bot_{E}}$ (see \cite{Carlet2019On}).

\begin{remark}\label{remark2.3}
If $M_{\tau}=I_{n}$ and $\pi=\pi_{e-\ell}$ in Eq. (\ref{eq2.1}), where $\pi_{e-\ell}(a):=a^{p^{e-\ell}}$ with $0\leq \ell\leq e-1$
for each $a\in\mathbb{F}_{q}$,
then $\langle\mathbf{a},\mathbf{b}\rangle_{\sigma}=\langle\mathbf{a},\mathbf{b}\rangle_{e-\ell}:=\sum_{i=1}^{n}a_{i}b_{i}^{p^{e-\ell}}$,
which is the $(e-\ell)$-Galois inner product \cite{Fan2017Galois} of $\mathbf{a}$ and $\mathbf{b}$; $\mathcal{C}^{\bot_{\sigma}}=\mathcal{C}^{\bot_{\ell}}$,
which is the $\ell$-Galois dual \cite{Fan2017Galois} of $\mathcal{C}$. Further,
if $\ell=0$, then $\langle\mathbf{a},\mathbf{b}\rangle_{\sigma}=\langle\mathbf{a},\mathbf{b}\rangle_{E}:=\sum_{i=1}^{n}a_{i}b_{i}$,
which is the Euclidean inner product of $\mathbf{a}$ and $\mathbf{b}$; $\mathcal{C}^{\bot_{\sigma}}=\mathcal{C}^{\bot_{E}}$, which is the Euclidean dual of $\mathcal{C}$.
If $\ell=\frac{e}{2}$ for even $e$, then $\langle\mathbf{a},\mathbf{b}\rangle_{\sigma}=\langle\mathbf{a},\mathbf{b}\rangle_{H}:=\sum_{i=1}^{n}a_{i}b_{i}^{\sqrt{q}}$,
which is the Hermitian inner product of $\mathbf{a}$ and $\mathbf{b}$; $\mathcal{C}^{\bot_{\sigma}}=\mathcal{C}^{\bot_{H}}$, which is the Hermitian dual of $\mathcal{C}$.
\end{remark}

We fix the following notations in the rest of this paper.
\begin{itemize}
\setlength{\itemsep}{1pt}
\setlength{\parsep}{1pt}
\setlength{\parskip}{1pt}
\item $q=p^{e}$ is a prime power, where $p$ is a prime number and $e$ is a positive integer.

\item $\mathbb{F}_{q}^{\ast}:=\mathbb{F}_{q}\backslash \{0\}$.

\item Let $\mathbb{F}_{q}^{s\times t}$ denote the set of all $s\times t$ matrices over $\mathbb{F}_{q}$.

\item $A^{\ast}:=\big[a_{i,j}^{\sqrt{q}}\big]$ and $A^{\dag}:=\big[a_{j,i}^{\sqrt{q}}\big]$ for any matrix $A=[a_{i,j}]$ over $\mathbb{F}_{q}$.

\item $\mathcal{C}A:=\{\mathbf{c}A|\mathbf{c}\in\mathcal{C}\}$, where $\mathcal{C}$ is a linear code of length $n$ over $\mathbb{F}_{q}$ and
$A\in\mathbb{F}_{q}^{n\times n}$.

\item $\pi_{s}$ denotes the automorphism of $\mathbb{F}_{q}$  satisfying $\pi_{s}(a)=a^{p^{s}}$ for any $a\in\mathbb{F}_{q}$,
where $1\leq s\leq e$.

\item $\pi_{s}(A):=[\pi_{s}(a_{i,j})]$ for any matrix $A=[a_{i,j}]$ over $\mathbb{F}_{q}$.

\item  For $\sigma=(\tau,\pi_{s})\in\mathrm{SLAut}(\mathbb{F}_{q}^{n})$,
$\tau$ corresponds to a monomial matrix $M_{\tau}=D_{\tau}P_{\tau}$ with $D_{\tau}$ being an invertible diagonal matrix
and $P_{\tau}$ being a permutation matrix under $\tau$, that is to say, $\tau(\mathbf{u})=\mathbf{u}M_{\tau}$ for any $\mathbf{u}\in\mathbb{F}_{q}^{n}$.
Therefore, $\sigma(\mathbf{c})=\tau\big(\pi_{s}(\mathbf{c})\big)=\pi_{s}(\mathbf{c})M_{\tau}$ for any $\mathbf{c}\in\mathbb{F}_{q}^{n}$
(see also Eq. (\ref{eq2.1})).

\item $\mathrm{Hull}_{\sigma}(\mathcal{C}):=\mathcal{C}\cap\mathcal{C}^{\bot_{\sigma}}$ is called the \emph{$\sigma$ hull} of the linear code $\mathcal{C}$,
where $\sigma\in \mathrm{SLAut}(\mathbb{F}_{q}^{n})$.

\item $\mathrm{Hull}_{\ell}(\mathcal{C}):=\mathcal{C}\cap\mathcal{C}^{\bot_{\ell}}$ is called the \emph{$\ell$-Galois hull} of the linear code $\mathcal{C}$,
where $0\leq \ell\leq e-1$.

\item $\mathrm{Hull}(\mathcal{C}):=\mathcal{C}\cap\mathcal{C}^{\bot}$ is called the \emph{Euclidean hull} of the linear code $\mathcal{C}$.

\item $\mathrm{Hull}_{H}(\mathcal{C}):=\mathcal{C}\cap\mathcal{C}^{\bot_{H}}$ is called the \emph{Hermitian hull} of the linear code $\mathcal{C}$.

\end{itemize}

\begin{lemma}\label{lemma1}
(\!\! \cite[Lemma 3.1]{Cao2024Onthe})
Let $\mathcal{C}$ be an $[n,k]_{q}$ linear code with parity check matrix $H$. Set $\sigma=(\tau,\pi_{s})\in\mathrm{SLAut}(\mathbb{F}_{q}^{n})$,
where $\tau$ corresponds to a monomial matrix $M_{\tau}=DP_{\tau}\in\mathbb{F}_{q}^{n\times n}$ and $1\leq s\leq e$. Then
$\pi_{s}(H)(M_{\tau}^{-1})^{\top}$ is a generator matrix of $\mathcal{C}^{\bot_{\sigma}}$.
\end{lemma}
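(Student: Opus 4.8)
The plan is to reduce the claim to a statement about Euclidean duals via the identity $\mathcal{C}^{\bot_{\sigma}}=\big(\sigma(\mathcal{C})\big)^{\bot_{E}}$ recorded just before this lemma, and then to exhibit $\pi_{s}(H)(M_{\tau}^{-1})^{\top}$ as a generator matrix of the Euclidean dual of $\sigma(\mathcal{C})$ by checking the two defining properties: that its rows lie in that dual, and that there are enough independent ones.

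First I would pin down a generator matrix of $\sigma(\mathcal{C})$. Let $G$ be a generator matrix of $\mathcal{C}$, so every codeword is $\mathbf{u}G$ with $\mathbf{u}\in\mathbb{F}_{q}^{k}$. Because $\pi_{s}$ is a field automorphism applied coordinatewise, it is additive and multiplicative on matrix entries, whence $\pi_{s}(\mathbf{u}G)=\pi_{s}(\mathbf{u})\pi_{s}(G)$; as $\mathbf{u}$ runs over $\mathbb{F}_{q}^{k}$ so does $\pi_{s}(\mathbf{u})$, so $\pi_{s}(G)$ generates $\pi_{s}(\mathcal{C})$. Since $\sigma(\mathbf{c})=\pi_{s}(\mathbf{c})M_{\tau}$ with $M_{\tau}$ invertible, the matrix $G':=\pi_{s}(G)M_{\tau}$ is a generator matrix of $\sigma(\mathcal{C})$, which is again an $[n,k]_{q}$ linear code.

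Next I would verify orthogonality. Writing $H':=\pi_{s}(H)(M_{\tau}^{-1})^{\top}$ and using $\pi_{s}(A^{\top})=\pi_{s}(A)^{\top}$ together with $\pi_{s}(AB)=\pi_{s}(A)\pi_{s}(B)$, I compute
\begin{align*}
G'(H')^{\top}=\pi_{s}(G)M_{\tau}\,M_{\tau}^{-1}\pi_{s}(H)^{\top}=\pi_{s}(G)\pi_{s}(H)^{\top}=\pi_{s}(GH^{\top}).
\end{align*}
Since $H$ is a parity check matrix of $\mathcal{C}$ we have $GH^{\top}=0$, so $G'(H')^{\top}=0$; every row of $H'$ is therefore Euclidean-orthogonal to all of $\sigma(\mathcal{C})$, i.e. lies in $\big(\sigma(\mathcal{C})\big)^{\bot_{E}}=\mathcal{C}^{\bot_{\sigma}}$.

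Finally, the dimension count closes the argument. As $\pi_{s}$ is a bijection it preserves rank, so $\mathrm{rank}(\pi_{s}(H))=\mathrm{rank}(H)=n-k$, and right-multiplication by the invertible $(M_{\tau}^{-1})^{\top}$ preserves rank, giving $\mathrm{rank}(H')=n-k$; meanwhile $\dim_{\mathbb{F}_{q}}\big(\sigma(\mathcal{C})\big)^{\bot_{E}}=n-k$ because $\sigma(\mathcal{C})$ has dimension $k$. Hence the $n-k$ independent rows of $H'$ form a basis of $\mathcal{C}^{\bot_{\sigma}}$, so $H'$ is a generator matrix as claimed. The only point demanding care — and the step I would treat as the main obstacle — is the transpose bookkeeping in the middle display, where one must track that $\big((M_{\tau}^{-1})^{\top}\big)^{\top}=M_{\tau}^{-1}$ cancels the $M_{\tau}$ coming from $\sigma$, and that the entrywise map $\pi_{s}$ genuinely commutes with both transposition and matrix multiplication; everything else is a routine rank argument.
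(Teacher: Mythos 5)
Your proof is correct: reducing to $\mathcal{C}^{\bot_{\sigma}}=\big(\sigma(\mathcal{C})\big)^{\bot_{E}}$, checking $G'(H')^{\top}=\pi_{s}(GH^{\top})=0$, and matching ranks is exactly the standard argument, and the transpose/automorphism bookkeeping is handled properly. Note that the paper itself imports this lemma from \cite[Lemma 3.1]{Cao2024Onthe} without reproducing a proof, so there is no in-paper argument to compare against; your write-up supplies the natural self-contained justification.
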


\begin{lemma}\label{lemma2}
(\!\! \cite[Theorem 2.1]{Guenda2020Linear})
Let $\mathcal{C}_{i}$ be an $[n,k_{i}]_{q}$ linear code with generator matrix $G_{i}$ and parity check matrix $H_{i}$ for $i=1,2$. Then
\vspace{-5pt}
\begin{itemize}
\item [(1)] $\mathrm{dim}_{\mathbb{F}_{q}}(\mathcal{C}_{1}\cap\mathcal{C}_{2})=k_{2}-\mathrm{rank}(H_{1}G_{2}^{\top})$;

\vspace{-4pt}

\item [(2)] $\mathrm{dim}_{\mathbb{F}_{q}}(\mathcal{C}_{1}\cap\mathcal{C}_{2})=k_{1}-\mathrm{rank}(G_{1}H_{2}^{\top})$.
\end{itemize}
\end{lemma}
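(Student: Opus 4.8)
The plan is to characterize the intersection $\mathcal{C}_{1}\cap\mathcal{C}_{2}$ as the isomorphic image of the kernel of a single matrix, and then to read off its dimension via rank--nullity. First I would parametrize $\mathcal{C}_{2}$ through its generator matrix: every codeword of $\mathcal{C}_{2}$ has the form $\mathbf{x}G_{2}$ for some $\mathbf{x}\in\mathbb{F}_{q}^{k_{2}}$, and this representation is unique because $G_{2}$ has full row rank $k_{2}$. Hence the map $\varphi\colon\mathbb{F}_{q}^{k_{2}}\to\mathcal{C}_{2}$, $\mathbf{x}\mapsto\mathbf{x}G_{2}$, is a linear isomorphism onto $\mathcal{C}_{2}$.

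Next I would bring in the parity check matrix: since $H_{1}$ is a parity check matrix for $\mathcal{C}_{1}$, a vector $\mathbf{c}$ lies in $\mathcal{C}_{1}$ if and only if $H_{1}\mathbf{c}^{\top}=\mathbf{0}$. Combining the two facts, a codeword $\mathbf{x}G_{2}\in\mathcal{C}_{2}$ belongs to $\mathcal{C}_{1}\cap\mathcal{C}_{2}$ exactly when $H_{1}(\mathbf{x}G_{2})^{\top}=H_{1}G_{2}^{\top}\mathbf{x}^{\top}=\mathbf{0}$. Therefore $\varphi$ restricts to an isomorphism between the solution space $\{\mathbf{x}^{\top}:H_{1}G_{2}^{\top}\mathbf{x}^{\top}=\mathbf{0}\}$ and $\mathcal{C}_{1}\cap\mathcal{C}_{2}$. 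Applying rank--nullity to $H_{1}G_{2}^{\top}$, which has $k_{2}$ columns, yields
\[
\mathrm{dim}_{\mathbb{F}_{q}}(\mathcal{C}_{1}\cap\mathcal{C}_{2})=k_{2}-\mathrm{rank}(H_{1}G_{2}^{\top}),
\]
which is assertion (1).

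Finally, (2) would follow from (1) by symmetry together with the invariance of rank under transposition. Interchanging the indices $1$ and $2$ in (1) gives $\mathrm{dim}_{\mathbb{F}_{q}}(\mathcal{C}_{2}\cap\mathcal{C}_{1})=k_{1}-\mathrm{rank}(H_{2}G_{1}^{\top})$; since $\mathcal{C}_{1}\cap\mathcal{C}_{2}=\mathcal{C}_{2}\cap\mathcal{C}_{1}$ and $\mathrm{rank}(H_{2}G_{1}^{\top})=\mathrm{rank}\big((H_{2}G_{1}^{\top})^{\top}\big)=\mathrm{rank}(G_{1}H_{2}^{\top})$, assertion (2) drops out. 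I expect no serious obstacle here; the only point demanding care is the bookkeeping of the row-versus-column conventions, in particular using the full-row-rank of $G_{2}$ to guarantee that $\varphi$ is injective, so that $\mathrm{dim}_{\mathbb{F}_{q}}(\mathcal{C}_{1}\cap\mathcal{C}_{2})$ equals the nullity of $H_{1}G_{2}^{\top}$ exactly rather than being merely bounded by it.
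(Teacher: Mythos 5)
Your argument is correct: the paper itself gives no proof of this lemma, citing it directly from Guenda--Jitman--Gulliver, and your parametrization of $\mathcal{C}_{2}$ by $\mathbf{x}\mapsto\mathbf{x}G_{2}$ followed by rank--nullity applied to $H_{1}G_{2}^{\top}$ is exactly the standard derivation of that cited result. Part (2) via index symmetry and $\mathrm{rank}(H_{2}G_{1}^{\top})=\mathrm{rank}(G_{1}H_{2}^{\top})$ is likewise sound, and you correctly flag the full-row-rank of $G_{2}$ as the point that upgrades the dimension count from an inequality to an equality.
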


\begin{lemma}\label{lemma2.6}
(\!\! \cite[Theorem 3.3]{Anderson2024Relative})
Let $q>2$ and let $\mathcal{C}_{i}$ be an $[n,k_{i}]_{q}$ linear code for $i=1,2$. Then for any integer $h$ with
$\mathrm{max}\{0,k_{1}-k_{2}\}\leq h\leq \mathrm{dim}_{\mathbb{F}_{q}}(\mathcal{C}_{1}\cap\mathcal{C}_{2}^{\bot})$, there exists a linear code
$\mathcal{C}_{2,h}$ equivalent to $\mathcal{C}_{2}$ such that
$\mathrm{dim}_{\mathbb{F}_{q}}(\mathcal{C}_{1}\cap\mathcal{C}_{2,h}^{\bot})=h$.
\end{lemma}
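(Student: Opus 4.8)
The plan is to convert the statement into a question about the rank of a single matrix and then to realize every admissible value by modifying $\mathcal{C}_{2}$ one coordinate at a time. Fix generator matrices $G_{1},G_{2}$ of $\mathcal{C}_{1},\mathcal{C}_{2}$. Every code monomially equivalent to $\mathcal{C}_{2}$ has the form $\mathcal{C}_{2}M$ for an $n\times n$ monomial matrix $M$, with generator matrix $G_{2}M$, and a generator matrix of $\mathcal{C}_{2,h}=\mathcal{C}_{2}M$ serves as a parity check matrix of $\mathcal{C}_{2,h}^{\bot}$. Hence Lemma \ref{lemma2}(2) gives
\begin{equation*}
\mathrm{dim}_{\mathbb{F}_{q}}(\mathcal{C}_{1}\cap\mathcal{C}_{2,h}^{\bot})=k_{1}-\mathrm{rank}\big(G_{1}(G_{2}M)^{\top}\big)=k_{1}-\mathrm{rank}\big(G_{1}M^{\top}G_{2}^{\top}\big).
\end{equation*}
At $M=I_{n}$ this recovers $\mathrm{dim}_{\mathbb{F}_{q}}(\mathcal{C}_{1}\cap\mathcal{C}_{2}^{\bot})$, and since $G_{1}M^{\top}G_{2}^{\top}$ is a $k_{1}\times k_{2}$ matrix its rank never exceeds $\min\{k_{1},k_{2}\}$, so the dimension is always at least $\max\{0,k_{1}-k_{2}\}$. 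Thus the two stated endpoints are exactly the maximal and minimal achievable dimensions, and the task reduces to showing that $\mathrm{rank}(G_{1}M^{\top}G_{2}^{\top})$ attains \emph{every} integer between $\mathrm{rank}(G_{1}G_{2}^{\top})$ and $\min\{k_{1},k_{2}\}$ as $M$ ranges over the monomial group.

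I would establish this by an induction that begins at $M=I_{n}$ (top of the range) and raises the rank by exactly one at each step, so that after $i$ steps one obtains a code $\mathcal{C}_{2,h}$, still monomially equivalent to $\mathcal{C}_{2}$, with relative hull dimension $\mathrm{dim}_{\mathbb{F}_{q}}(\mathcal{C}_{1}\cap\mathcal{C}_{2}^{\bot})-i$. Let $\Gamma$ be the generator matrix of the current equivalent code and write $B=G_{1}\Gamma^{\top}=\sum_{j=1}^{n}\mathbf{u}_{j}\mathbf{v}_{j}^{\top}$, where $\mathbf{u}_{j},\mathbf{v}_{j}$ are the $j$-th columns of $G_{1}$ and $\Gamma$. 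Rescaling the $j$-th coordinate by $\lambda\in\mathbb{F}_{q}^{\ast}$ replaces $B$ by the rank-one update $B+(\lambda-1)\mathbf{u}_{j}\mathbf{v}_{j}^{\top}$. By the standard rank-one-update criterion this increases the rank by exactly one, for \emph{any} scalar $\lambda\neq 1$, precisely when $\mathbf{u}_{j}\notin\mathrm{Col}(B)$ and $\mathbf{v}_{j}\notin\mathrm{Col}(B^{\top})$. This is the point where the hypothesis $q>2$ enters: it guarantees a scalar $\lambda\in\mathbb{F}_{q}^{\ast}\setminus\{1\}$, so a genuine rank-one perturbation is available. The \textbf{one-step lemma} I need is therefore: whenever $\mathrm{rank}(B)<\min\{k_{1},k_{2}\}$, some monomial modification raises the rank by exactly one; iterating then realizes each intermediate value and hence each $h$ in the stated range.

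The main obstacle is precisely this one-step lemma, which is more delicate than the column/row-space criterion suggests. A dimension count shows that $\Gamma$-span${}+\mathcal{C}_{1}^{\bot}$ and $\mathcal{C}_{1}+(\Gamma\text{-span})^{\bot}$ are proper subspaces of $\mathbb{F}_{q}^{n}$ as long as $\mathrm{rank}(B)<\min\{k_{1},k_{2}\}$, and a rank-raising coordinate for a pure diagonal rescaling exists exactly when some standard basis vector $\mathbf{e}_{j}$ avoids both subspaces. This need not happen: if every coordinate $j$ with $\mathbf{u}_{j}\notin\mathrm{Col}(B)$ is one where the corresponding column $\mathbf{v}_{j}$ of $\Gamma$ vanishes, then every single-coordinate scaling leaves the rank unchanged. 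I would dispose of these degenerate configurations by enlarging the permitted move to include a coordinate \emph{transposition}, which reshuffles the supports of the two codes so that a rank-raising coordinate becomes available; one must then verify that such a transposition (composed with a scaling) can always be chosen so that the rank changes by only one.

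Verifying that a unit-change monomial move exists in every configuration is the crux of the argument. Equivalently, one must show that $\mathrm{rank}(G_{1}M^{\top}G_{2}^{\top})$ does reach the full value $\min\{k_{1},k_{2}\}$ for some monomial $M$, and that the set of attainable ranks forms an unbroken interval; both of these are exactly where the restriction $q>2$ is indispensable, since over $\mathbb{F}_{2}$ the absence of a scalar other than $1$ makes the required perturbations unavailable and the interval can genuinely fail to be filled.
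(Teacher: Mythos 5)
Your reduction is sound and is essentially the right framework: writing every monomially equivalent code as $\mathcal{C}_{2}M$ and using Lemma \ref{lemma2}(2) to convert the statement into the claim that $\mathrm{rank}(G_{1}M^{\top}G_{2}^{\top})$ sweeps out every integer from $\mathrm{rank}(G_{1}G_{2}^{\top})$ up to $\min\{k_{1},k_{2}\}$ as $M$ ranges over the monomial group is correct, as is the lower bound $\max\{0,k_{1}-k_{2}\}$ on the intersection dimension. (For the record, the paper does not prove this lemma at all --- it is imported verbatim from the cited reference --- so there is no in-paper argument to compare against.) However, your proof is not complete: the entire burden rests on the ``one-step lemma'' --- that whenever $\mathrm{rank}(B)<\min\{k_{1},k_{2}\}$ some monomial move changes the rank by exactly one --- and you assert rather than prove it, explicitly flagging it as the crux and leaving it open. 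That is a genuine gap, not a routine verification.

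Concretely: a single diagonal rescaling of coordinate $j$ raises the rank only if $\mathbf{u}_{j}\notin\mathrm{Col}(B)$ \emph{and} $\mathbf{v}_{j}\notin\mathrm{Col}(B^{\top})$ hold simultaneously, and no such $j$ need exist even though each condition is separately satisfiable (take $G_{1}=(1,0)$, $G_{2}=(0,1)$: then $B=0$ has deficient rank, yet every diagonal rescaling leaves $G_{1}DG_{2}^{\top}=0$). Your proposed repair via a coordinate transposition is plausible but unanalyzed: swapping coordinates $j_{1}$ and $j_{2}$ (with rescalings $\alpha,\beta$) replaces $B$ by $B-\mathbf{u}_{j_{1}}\mathbf{v}_{j_{1}}^{\top}-\mathbf{u}_{j_{2}}\mathbf{v}_{j_{2}}^{\top}+\alpha\mathbf{u}_{j_{1}}\mathbf{v}_{j_{2}}^{\top}+\beta\mathbf{u}_{j_{2}}\mathbf{v}_{j_{1}}^{\top}$, a perturbation of rank up to two whose net effect on $\mathrm{rank}(B)$ can a priori be anything in $\{-2,\dots,+2\}$; you give no argument that the indices and scalars can always be chosen so that the change is exactly $+1$, nor do you rule out getting stuck before reaching full rank $\min\{k_{1},k_{2}\}$. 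Until the one-step lemma is actually established, neither the attainability of the endpoint $h=\max\{0,k_{1}-k_{2}\}$ nor the claim that the attainable values form an unbroken interval is proved, so the argument as written does not prove the lemma.
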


\section{Some properties of MP codes with $\sigma$ inner product}

We present a formula for calculating the dimension of the $\sigma$ hull of an MP code in the following theorem.

\begin{theorem}\label{theorem3.1}
Let $\mathcal{C}(A)=[\mathcal{C}_{1},\mathcal{C}_{2},\ldots,\mathcal{C}_{k}]\cdot A$, where $\mathcal{C}_{i}$ is an $[n,t_{i}]_{q}$ code for $i=1,2,\ldots,k$ and $A\in\mathbb{F}_{q}^{k\times k}$.
Let $\sigma=(\tau,\pi_{s})\in\mathrm{SLAut}(\mathbb{F}_{q}^{kn})$ and $\widetilde{\sigma}=(\widetilde{\tau},\pi_{s})\in\mathrm{SLAut}(\mathbb{F}_{q}^{n})$,
where $1\leq s\leq e$, $\widetilde{\tau}$ corresponds to a $\widetilde{\tau}$-monomial matrix $M_{\widetilde{\tau}}\in\mathbb{F}_{q}^{n}$,
and $\tau$ corresponds to a $\tau$-monomial matrix $M_{\tau}=M_{\widehat{\tau}}\otimes M_{\widetilde{\tau}}\in\mathbb{F}_{q}^{kn}$ for $\widehat{\tau}$-monomial matrix $M_{\widehat{\tau}}\in\mathbb{F}_{q}^{k}$.
If $\pi_{s}(A)M_{\widehat{\tau}}A^{\top}$ is $\varrho$-monomial for some permutation $\varrho\in S_{k}$, then
\vspace{-4pt}
\begin{align*}
\mathrm{dim}_{\mathbb{F}_{q}}(\mathrm{Hull}_{\sigma}(\mathcal{C}(A)))
=\sum_{i=1}^{k}\mathrm{dim}_{\mathbb{F}_{q}}(\mathcal{C}_{i}\cap\mathcal{C}_{\rho(i)}^{\bot_{\widetilde{\sigma}}})
\end{align*}
\end{theorem}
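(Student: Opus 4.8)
The plan is to reduce $\mathrm{dim}_{\mathbb{F}_{q}}(\mathrm{Hull}_{\sigma}(\mathcal{C}(A)))$ to the rank of one $K\times K$ matrix (with $K=\sum_{i}t_{i}$) and then use the Kronecker factorization $M_{\tau}=M_{\widehat{\tau}}\otimes M_{\widetilde{\tau}}$ to split that rank over the constituent codes. First I would record a general hull formula: for an $[N,K]_{q}$ code $\mathcal{C}$ with generator matrix $G$, Definition~\ref{definition2.2} gives $\langle\mathbf{a},\mathbf{b}\rangle_{\sigma}=\mathbf{a}M_{\tau}^{\top}\pi_{s}(\mathbf{b})^{\top}$, so $\pi_{s}(G)M_{\tau}$ is a parity-check matrix of $\mathcal{C}^{\bot_{\sigma}}$ (consistent with Lemma~\ref{lemma1}; its rows are independent since $G$ has full rank and $M_{\tau}$ is invertible), and Lemma~\ref{lemma2}(2) with $\mathcal{C}_{1}=\mathcal{C}$, $\mathcal{C}_{2}=\mathcal{C}^{\bot_{\sigma}}$ yields
\[
\mathrm{dim}_{\mathbb{F}_{q}}(\mathrm{Hull}_{\sigma}(\mathcal{C}))=K-\mathrm{rank}\big(GM_{\tau}^{\top}\pi_{s}(G)^{\top}\big).
\]
I would apply this to $\mathcal{C}(A)$ with $G$ as in~\eqref{equation1.1}; the hypothesis forces $\pi_{s}(A)M_{\widehat{\tau}}A^{\top}$, hence $A$ and $G$, to be nonsingular, so that $K=\sum_{i}t_{i}$.

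Next I would compute $GM_{\tau}^{\top}\pi_{s}(G)^{\top}$ blockwise. Indexing the rows and columns of $G$ by pairs $(i,r)$ and $(j,c)$ with $G_{(i,r),(j,c)}=a_{ij}(G_{i})_{r,c}$, and using $M_{\tau}^{\top}=M_{\widehat{\tau}}^{\top}\otimes M_{\widetilde{\tau}}^{\top}$, the defining sum separates into a coefficient factor and a code factor, giving
\[
\big(GM_{\tau}^{\top}\pi_{s}(G)^{\top}\big)_{(i,r),(i',r')}=\big[AM_{\widehat{\tau}}^{\top}\pi_{s}(A)^{\top}\big]_{i,i'}\cdot\big[G_{i}M_{\widetilde{\tau}}^{\top}\pi_{s}(G_{i'})^{\top}\big]_{r,r'}.
\]
Since $AM_{\widehat{\tau}}^{\top}\pi_{s}(A)^{\top}=\big(\pi_{s}(A)M_{\widehat{\tau}}A^{\top}\big)^{\top}$, the $(i,i')$ block of the product equals the scalar $[\pi_{s}(A)M_{\widehat{\tau}}A^{\top}]_{i',i}$ times $G_{i}M_{\widetilde{\tau}}^{\top}\pi_{s}(G_{i'})^{\top}$.

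Now I would invoke the hypothesis that $\pi_{s}(A)M_{\widehat{\tau}}A^{\top}$ is $\varrho$-monomial. With the convention of Remark~\ref{remark2.1}, this scalar is nonzero exactly when $i'=\varrho(i)$, so every block vanishes except those in positions $(i,\varrho(i))$; as $\varrho\in S_{k}$ is a permutation, the matrix is block-monomial, and permuting its block-columns by $\varrho$ renders it block-diagonal. Hence
\[
\mathrm{rank}\big(GM_{\tau}^{\top}\pi_{s}(G)^{\top}\big)=\sum_{i=1}^{k}\mathrm{rank}\big(G_{i}M_{\widetilde{\tau}}^{\top}\pi_{s}(G_{\varrho(i)})^{\top}\big),
\]
the nonzero scalar factors being irrelevant to the ranks. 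Since $\pi_{s}(G_{\varrho(i)})M_{\widetilde{\tau}}$ is a parity-check matrix of $\mathcal{C}_{\varrho(i)}^{\bot_{\widetilde{\sigma}}}$, Lemma~\ref{lemma2}(2) gives $\mathrm{rank}(G_{i}M_{\widetilde{\tau}}^{\top}\pi_{s}(G_{\varrho(i)})^{\top})=t_{i}-\mathrm{dim}_{\mathbb{F}_{q}}(\mathcal{C}_{i}\cap\mathcal{C}_{\varrho(i)}^{\bot_{\widetilde{\sigma}}})$. Substituting into the first display (with $\varrho=\rho$) collapses everything to the asserted identity.

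The main obstacle is the blockwise computation in the second step: separating the Kronecker factors cleanly and keeping the two monomial conventions (for $M_{\widehat{\tau}},M_{\widetilde{\tau}}$ and for $\varrho$) aligned, so that the coefficient matrix appears exactly as the transpose of $\pi_{s}(A)M_{\widehat{\tau}}A^{\top}$ and the surviving blocks pair $\mathcal{C}_{i}$ with $\mathcal{C}_{\varrho(i)}$ in the stated order. Once the factorization and the ``one nonzero block per block-row and per block-column'' structure are in place, the rank additivity and the two applications of Lemma~\ref{lemma2} are routine.
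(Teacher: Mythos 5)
Your proposal is correct and follows essentially the same route as the paper: reduce the hull dimension to $\mathrm{rank}\big(\pi_{s}(G)M_{\tau}G^{\top}\big)$ (you work with its transpose $GM_{\tau}^{\top}\pi_{s}(G)^{\top}$, which has the same rank), use the Kronecker factorization $M_{\tau}=M_{\widehat{\tau}}\otimes M_{\widetilde{\tau}}$ to split each block into a scalar $\pi_{s}(\mathbf{a}_{i})M_{\widehat{\tau}}\mathbf{a}_{j}^{\top}$ times $\pi_{s}(G_{i})M_{\widetilde{\tau}}G_{j}^{\top}$, invoke the $\varrho$-monomial hypothesis to get a block-monomial matrix whose rank is additive over the surviving blocks, and finish with Lemma~\ref{lemma2}, exactly as in the paper's appeal to Lemma~\ref{lemma3.1} and Corollary~\ref{corollary3.2}. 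Your explicit remark that the hypothesis forces $A$ (hence $G$) to have full rank, so that $\dim_{\mathbb{F}_{q}}\mathcal{C}(A)=\sum_{i}t_{i}$, is a detail the paper leaves implicit.
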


\begin{proof}
Denote by $\mathbf{a}_{i}$ the $i$-th row of $A=(a_{i,j})\in\mathbb{F}_{q}^{k\times k}$ for $i=1,2,\ldots,k$. Let $G_{i}$ be a generator matrix of $\mathcal{C}_{i}$ for $i=1,2,\ldots,k$. By Eq. (\ref{equation1.1}), we know that $\mathcal{C}(A)$ has a generator matrix with the form
\vspace{-4pt}
\begin{align}\label{equation3.1}
G=\left(
\begin{array}{c}
\mathbf{a}_{1}\otimes G_{1}\\
\vdots\\
\mathbf{a}_{k}\otimes G_{k}\\
\end{array}\right).
\end{align}

As $\pi_{s}(A)M_{\widehat{\tau}}A^{\top}$ is $\varrho$-monomial for some permutation $\varrho\in S_{k}$, we can represent it as
$\pi_{s}(A)M_{\widehat{\tau}}A^{\top}=D_{\varrho}P_{\varrho}$, where $D_{\varrho}=\mathrm{diag}(\alpha_{1},\alpha_{2},\ldots,\alpha_{k})$ is invertible diagonal
with $\alpha_{i}\in\mathbb{F}_{q}^{\ast}$ for $i=1,2,\ldots,k$, and $P_{\varrho}$ is a permutation matrix under $\varrho$.
Then for any $1\leq i, j\leq k$, we have
\vspace{-4pt}
\begin{equation*}
\pi_{s}(\mathbf{a}_{i})M_{\widehat{\tau}}\mathbf{a}_{j}^{\top}
=\left\{
\begin{array}{cc}
\alpha_{j}, & \text{if } i=\varrho(j); \\
0, & \text{if } i\neq\varrho(j).
\end{array}
\right.
\end{equation*}
Hence, for any $1\leq i, j\leq k$, it follows that
\vspace{-4pt}
\begin{align*}
\pi_{s}(\mathbf{a}_{i}\otimes G_{i})M_{\tau}(\mathbf{a}_{j}^{\top}\otimes G_{j}^{\top})
&=(\pi_{s}(\mathbf{a}_{i})\otimes\pi_{s}(G_{i}))(M_{\widehat{\tau}}\otimes M_{\widetilde{\tau}})(\mathbf{a}_{j}^{\top}\otimes G_{j}^{\top})\\
&=(\pi_{s}(\mathbf{a}_{i})M_{\widehat{\tau}}\mathbf{a}_{j}^{\top})(\pi_{s}(G_{i})M_{\widetilde{\tau}}G_{j}^{\top})\\
&=\left\{
\begin{array}{cc}
\alpha_{j}\pi_{s}(G_{\varrho(j)})M_{\widetilde{\tau}}G_{j}^{\top}, & \text{if } i=\varrho(j); \\
0, & \text{if } i\neq\varrho(j).
\end{array}
\right.
\end{align*}

Combining this with Eq. (\ref{equation3.1}), we derive that
\vspace{-4pt}
\begin{align*}
\pi_{s}(G)M_{\tau}G^{\top}
&=\left(
\begin{array}{c}
\pi_{s}(\mathbf{a}_{1}\otimes G_{1})\\
\vdots\\
\pi_{s}(\mathbf{a}_{k}\otimes G_{k})\\
\end{array}\right)M_{\tau}(\mathbf{a}_{1}^{\top}\otimes G_{1}^{\top},\ldots,\mathbf{a}_{k}^{\top}\otimes G_{k}^{\top})\\
&=\left(
\begin{array}{ccc}
\pi_{s}(\mathbf{a}_{1}\otimes G_{1})M_{\tau}(\mathbf{a}_{1}^{\top}\otimes G_{1}^{\top})&\cdots&\pi_{s}(\mathbf{a}_{1}\otimes G_{1})M_{\tau}(\mathbf{a}_{k}^{\top}\otimes G_{k}^{\top})\\
\vdots&\ddots&\vdots\\
\pi_{s}(\mathbf{a}_{k}\otimes G_{k})M_{\tau}(\mathbf{a}_{1}^{\top}\otimes G_{1}^{\top})&\cdots&\pi_{s}(\mathbf{a}_{k}\otimes G_{k})M_{\tau}(\mathbf{a}_{k}^{\top}\otimes G_{k}^{\top})\\
\end{array}\right)\\
&=\left(
\begin{array}{ccc}
\vdots&\cdots&\vdots\\
\alpha_{1}\pi_{s}(G_{\varrho(1)})M_{\widetilde{\tau}}G_{1}^{\top}&\cdots&\vdots\\
\vdots&\ddots&\alpha_{k}\pi_{s}(G_{\varrho(k)})M_{\widetilde{\tau}}G_{k}^{\top}\\
\vdots&\cdots&\vdots\\
\end{array}\right),
\end{align*}
which, together with Lemma \ref{lemma3.1}, illustrates that
\vspace{-4pt}
\begin{align*}
\mathrm{rank}(\pi_{s}(G)M_{\tau}G^{\top})
=\sum_{i=1}^{k}\mathrm{rank}(\pi_{s}(G_{\varrho(i)})M_{\widetilde{\tau}}G_{i}^{\top})
=\sum_{i=1}^{k}(t_{i}-\mathrm{dim}_{\mathbb{F}_{q}}(\mathcal{C}_{i}\cap\mathcal{C}_{\rho(i)}^{\bot_{\widetilde{\sigma}}})).
\end{align*}

Finally, by Corollary \ref{corollary3.2}, we deduce that
\vspace{-4pt}
\begin{align*}
\mathrm{dim}_{\mathbb{F}_{q}}(\mathrm{Hull}_{\sigma}(\mathcal{C}(A)))
=\sum_{i=1}^{k}t_{i}-\mathrm{rank}(\pi_{s}(G)M_{\tau}G^{\top})
=\sum_{i=1}^{k}\mathrm{dim}_{\mathbb{F}_{q}}(\mathcal{C}_{i}\cap\mathcal{C}_{\rho(i)}^{\bot_{\widetilde{\sigma}}}),
\end{align*}
which completes the proof. $\hfill\square$
\end{proof}

\vspace{6pt}

In the following theorem, we give necessary and sufficient condition for the MP codes to be $\sigma$ dual-containing codes and $\sigma$ self-orthogonal codes.

\begin{theorem}\label{theorem3.2}
With the notations defined as in Theorem \ref{theorem3.1}. Then the following statements hold.
\vspace{-5pt}
\begin{itemize}
\item [(1)] $\mathcal{C}(A)$ is $\sigma$ dual-containing if and only if $\mathcal{C}_{\rho(i)}^{\bot_{\widetilde{\sigma}}}\subseteq \mathcal{C}_{i}$ for $i=1,2,\ldots,k$.

\vspace{-4pt}

\item [(2)] $\mathcal{C}(A)$ is $\sigma$ self-orthogonal if and only if $\mathcal{C}_{i}\subseteq\mathcal{C}_{\rho(i)}^{\bot_{\widetilde{\sigma}}}$ for $i=1,2,\ldots,k$.
\end{itemize}
\end{theorem}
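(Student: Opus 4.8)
The plan is to read off both equivalences directly from the $\sigma$ hull dimension formula established in Theorem \ref{theorem3.1}, by translating each containment into a statement about $\dim_{\mathbb{F}_{q}}(\mathrm{Hull}_{\sigma}(\mathcal{C}(A)))$. Recall that $\mathcal{C}(A)$ is $\sigma$ self-orthogonal exactly when $\mathcal{C}(A)\subseteq\mathcal{C}(A)^{\bot_{\sigma}}$, i.e. $\mathrm{Hull}_{\sigma}(\mathcal{C}(A))=\mathcal{C}(A)$, and $\sigma$ dual-containing exactly when $\mathcal{C}(A)^{\bot_{\sigma}}\subseteq\mathcal{C}(A)$, i.e. $\mathrm{Hull}_{\sigma}(\mathcal{C}(A))=\mathcal{C}(A)^{\bot_{\sigma}}$. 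Since the hypothesis that $\pi_{s}(A)M_{\widehat{\tau}}A^{\top}$ is $\varrho$-monomial (hence invertible) forces $A$ to be nonsingular, $\mathcal{C}(A)$ has dimension $\sum_{i=1}^{k}t_{i}$, the same fact already used in the proof of Theorem \ref{theorem3.1}, so that $\dim_{\mathbb{F}_{q}}(\mathcal{C}(A)^{\bot_{\sigma}})=kn-\sum_{i=1}^{k}t_{i}=\sum_{i=1}^{k}(n-t_{i})$.

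I would treat the self-orthogonal case (2) first, as it is the more immediate consequence. Here $\mathcal{C}(A)$ is $\sigma$ self-orthogonal if and only if $\dim_{\mathbb{F}_{q}}(\mathrm{Hull}_{\sigma}(\mathcal{C}(A)))=\dim_{\mathbb{F}_{q}}(\mathcal{C}(A))=\sum_{i=1}^{k}t_{i}$. Substituting the formula of Theorem \ref{theorem3.1} and using the trivial bound $\dim_{\mathbb{F}_{q}}(\mathcal{C}_{i}\cap\mathcal{C}_{\rho(i)}^{\bot_{\widetilde{\sigma}}})\leq\dim_{\mathbb{F}_{q}}(\mathcal{C}_{i})=t_{i}$, the sum $\sum_{i=1}^{k}\dim_{\mathbb{F}_{q}}(\mathcal{C}_{i}\cap\mathcal{C}_{\rho(i)}^{\bot_{\widetilde{\sigma}}})$ reaches $\sum_{i=1}^{k}t_{i}$ precisely when equality holds in every summand. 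Since $\dim_{\mathbb{F}_{q}}(\mathcal{C}_{i}\cap\mathcal{C}_{\rho(i)}^{\bot_{\widetilde{\sigma}}})=\dim_{\mathbb{F}_{q}}(\mathcal{C}_{i})$ is equivalent to $\mathcal{C}_{i}\subseteq\mathcal{C}_{\rho(i)}^{\bot_{\widetilde{\sigma}}}$, part (2) follows at once.

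For the dual-containing case (1) the same template applies, but with the two factors of the intersection interchanged. Now $\mathcal{C}(A)$ is $\sigma$ dual-containing if and only if $\dim_{\mathbb{F}_{q}}(\mathrm{Hull}_{\sigma}(\mathcal{C}(A)))=\dim_{\mathbb{F}_{q}}(\mathcal{C}(A)^{\bot_{\sigma}})=\sum_{i=1}^{k}(n-t_{i})$, and the relevant bound is $\dim_{\mathbb{F}_{q}}(\mathcal{C}_{i}\cap\mathcal{C}_{\rho(i)}^{\bot_{\widetilde{\sigma}}})\leq\dim_{\mathbb{F}_{q}}(\mathcal{C}_{\rho(i)}^{\bot_{\widetilde{\sigma}}})=n-t_{\rho(i)}$, coming from the inclusion of the intersection into its second factor. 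The step that makes the counting close up is that $\rho$ is a permutation of $\{1,\ldots,k\}$, so $\sum_{i=1}^{k}(n-t_{\rho(i)})=\sum_{i=1}^{k}(n-t_{i})$; summing the per-block bound over $i$ therefore gives $\dim_{\mathbb{F}_{q}}(\mathrm{Hull}_{\sigma}(\mathcal{C}(A)))\leq\dim_{\mathbb{F}_{q}}(\mathcal{C}(A)^{\bot_{\sigma}})$, with equality exactly when each summand attains its maximum $n-t_{\rho(i)}$. Finally, $\dim_{\mathbb{F}_{q}}(\mathcal{C}_{i}\cap\mathcal{C}_{\rho(i)}^{\bot_{\widetilde{\sigma}}})=\dim_{\mathbb{F}_{q}}(\mathcal{C}_{\rho(i)}^{\bot_{\widetilde{\sigma}}})$ is equivalent to $\mathcal{C}_{\rho(i)}^{\bot_{\widetilde{\sigma}}}\subseteq\mathcal{C}_{i}$, which is the asserted condition.

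The argument is essentially routine once Theorem \ref{theorem3.1} is in hand; the only genuinely non-formal ingredient is the observation that the sum of the block-wise upper bounds collapses to $\dim_{\mathbb{F}_{q}}(\mathcal{C}(A)^{\bot_{\sigma}})$ precisely because $\rho$ permutes the indices, together with the elementary principle that a sum of upper-bounded quantities meets its total bound if and only if every term is extremal. I anticipate no serious obstacle beyond checking that the nonsingularity of $A$, and hence $\dim_{\mathbb{F}_{q}}(\mathcal{C}(A))=\sum_{i=1}^{k}t_{i}$, is legitimately inherited from the $\varrho$-monomial hypothesis on $\pi_{s}(A)M_{\widehat{\tau}}A^{\top}$.
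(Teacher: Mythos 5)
Your proposal is correct and follows essentially the same route as the paper: both reduce the containments to the equality $\mathrm{dim}_{\mathbb{F}_{q}}(\mathrm{Hull}_{\sigma}(\mathcal{C}(A)))=\mathrm{dim}_{\mathbb{F}_{q}}(\mathcal{C}(A)^{\bot_{\sigma}})$ (resp. $=\mathrm{dim}_{\mathbb{F}_{q}}(\mathcal{C}(A))$), invoke the formula of Theorem \ref{theorem3.1}, and conclude by the termwise-extremality argument, with the permutation identity $\sum_{i}(n-t_{\varrho(i)})=\sum_{i}(n-t_{i})$ playing the same role in both. The only cosmetic difference is that you obtain $\mathrm{dim}_{\mathbb{F}_{q}}(\mathcal{C}(A)^{\bot_{\sigma}})$ from the nonsingularity of $A$, whereas the paper cites the known description of $\mathcal{C}(A)^{\bot_{\sigma}}$ as a matrix-product code of the $\widetilde{\sigma}$ duals; these are equivalent.
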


\begin{proof}
According to Remark 4.5 2) of \cite{Cao2024Onthe}, we have
\vspace{-4pt}
\begin{align*}
\mathrm{dim}_{\mathbb{F}_{q}}(\mathcal{C}(A)^{\bot_{\sigma}})
=\sum_{i=1}^{k}\mathrm{dim}_{\mathbb{F}_{q}}(\mathcal{C}_{i}^{\bot_{\widetilde{\sigma}}})
=\sum_{i=1}^{k}\mathrm{dim}_{\mathbb{F}_{q}}(\mathcal{C}_{\varrho(i)}^{\bot_{\widetilde{\sigma}}}).
\end{align*}
Combining it with Theorem \ref{theorem3.1}, we obtain
\vspace{-4pt}
\begin{align*}
\mathcal{C}(A)\ \text{is $\sigma$ dual-containing}
&\Longleftrightarrow\mathrm{dim}_{\mathbb{F}_{q}}(\mathrm{Hull}_{\sigma}(\mathcal{C}(A)))=\mathrm{dim}_{\mathbb{F}_{q}}(\mathcal{C}(A)^{\bot_{\sigma}}).\\
&\Longleftrightarrow\sum_{i=1}^{k}\mathrm{dim}_{\mathbb{F}_{q}}(\mathcal{C}_{i}\cap\mathcal{C}_{\rho(i)}^{\bot_{\widetilde{\sigma}}})
=\sum_{i=1}^{k}\mathrm{dim}_{\mathbb{F}_{q}}(\mathcal{C}_{\varrho(i)}^{\bot_{\widetilde{\sigma}}}).\\
&\Longleftrightarrow\mathcal{C}_{i}\cap\mathcal{C}_{\varrho(i)}^{\bot_{\widetilde{\sigma}}}=\mathcal{C}_{\varrho(i)}^{\bot_{\widetilde{\sigma}}} \ \text{for $i=1,2,\ldots,k$}.\\
&\Longleftrightarrow\mathcal{C}_{\varrho(i)}^{\bot_{\widetilde{\sigma}}}\subseteq\mathcal{C}_{i} \ \text{for $i=1,2,\ldots,k$}.
\end{align*}
This completes the proof of statement (1). Similarly, we can prove statement (2). $\hfill\square$
\end{proof}

\vspace{6pt}

In the following remark, we elaborate on the advantages of Theorem \ref{theorem3.2} (1) compared to existing conclusions.

\begin{remark}
In Section 5 of the recent paper \cite{Cao2024Onthe}, the authors constructed several families of $\sigma$ dual-containing MP codes over some small finite fields. However, a systematic method for constructing $\sigma$ dual-containing MP codes was lacking in \cite[Section 5]{Cao2024Onthe}.
In contrast, Theorem \ref{theorem3.2} presents an explicit sufficient and necessary condition for MP codes to be $\sigma$ dual-containing;
in other words, it establishes a general method for constructing $\sigma$ dual-containing MP codes.
We point out that in \cite[Section 5]{Cao2024Onthe}, when constructing $\sigma$ dual-containing MP codes $\mathcal{C}(A)$, the defining matrices $A$ therein satisfy the condition that $\pi_{s}(A)M_{\widehat{\tau}}A^{\top}$ is an invertible diagonal matrix. In contrast, the defining matrices $A$ in Theorem \ref{theorem3.2} satisfy a more general condition, namely, $\pi_{s}(A)M_{\widehat{\tau}}A^{\top}$ is a $\varrho$-monomial matrix for any permutation
$\varrho\in S_{k}$. This leads to the following benefits:
\vspace{-4pt}
\begin{itemize}
\item [(1)] The variety of matrices $M_{\widehat{\tau}}$ and $M_{\tau}$ satisfying the conditions of Theorem \ref{theorem3.2} is far greater than those satisfying the conditions from \cite[Section 5]{Cao2024Onthe}. Therefore, the variety of $\sigma$ dual-containing MP codes constructed using the method described in Theorem \ref{theorem3.2} is much greater than those constructed using the method described in \cite[Section 5]{Cao2024Onthe}.

\vspace{-4pt}

\item [(2)] When constructing $\sigma$ dual-containing MP codes, \cite[Section 5]{Cao2024Onthe} requires each constituent code $\mathcal{C}_{i}$ to be dual-containing. However, Theorem \ref{theorem3.2} suggests that this condition may not be necessary for constructing $\sigma$ dual-containing MP codes.
    Specifically, the conditions regarding each constituent code $\mathcal{C}_{i}$ in Theorem \ref{theorem3.2} are highly flexible, satisfying $\mathcal{C}_{\rho(i)}^{\bot_{\widetilde{\sigma}}}\subseteq \mathcal{C}_{i}$ for an arbitrary permutation $\varrho \in S_{k}$. This implies that theoretically, the inclusion relationships between these constituent codes can reach up to $k!$ varieties. Therefore, the selection of constituent codes $\mathcal{C}_{i}$ is highly flexible when constructing $\sigma$ dual-containing MP codes using Theorem \ref{theorem3.2}.

\vspace{-4pt}

\item [(3)] It is not difficult to observe that the number of constituent codes $\mathcal{C}_{i}$ satisfying the relationship $\mathcal{C}_{\rho(i)}^{\bot_{\widetilde{\sigma}}}\subseteq \mathcal{C}_{i}$ is greater than the number of constituent codes $\mathcal{C}_{i}$ satisfying dual-containing. Therefore, using Theorem \ref{theorem3.2}, we can construct many new $\sigma$ dual-containing MP codes that cannot be obtained through the method in \cite{Cao2024Onthe}.
\end{itemize}
\end{remark}

\section{Constructing EAQECCs using $\sigma$ hulls of MP codes }

\begin{lemma}\label{lemma3.1}
Let $\mathcal{C}_{i}$ be an $[n,k_{i}]_{q}$ linear code with generator matrix $G_{i}$ and parity check matrix $H_{i}$ for $i=1,2$.
Set $\sigma=(\tau,\pi_{s})\in\mathrm{SLAut}(\mathbb{F}_{q}^{n})$, where $\tau$ corresponds to a monomial matrix $M_{\tau}=DP_{\tau}\in\mathbb{F}_{q}^{n\times n}$ and $1\leq s\leq e$. Then
\vspace{-5pt}
\begin{itemize}
\item [(1)] $\mathrm{dim}_{\mathbb{F}_{q}}(\mathcal{C}_{1}\cap\mathcal{C}_{2}^{\bot_{\sigma}})
=n-k_{2}-\mathrm{rank}(H_{1}M_{\tau}^{-1}\pi_{s}(H_{2})^{\top})
=k_{1}-\mathrm{rank}(\pi_{s}(G_{2})M_{\tau}G_{1}^{\top})$;

\vspace{-4pt}

\item [(2)] $\mathrm{dim}_{\mathbb{F}_{q}}((\mathcal{C}_{1}^{\bot_{\sigma}})^{\bot_{\sigma}}\cap\mathcal{C}_{2}^{\bot_{\sigma}})
=n-k_{2}-\mathrm{rank}(H_{2}M_{\tau}^{-1}\pi_{s}(H_{1})^{\top})
=k_{1}-\mathrm{rank}(\pi_{s}(G_{1})M_{\tau}G_{2}^{\top})$.
\end{itemize}
\end{lemma}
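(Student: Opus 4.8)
The plan is to reduce both identities to Lemma~\ref{lemma2} by writing down explicit generator and parity check matrices for each of the $\sigma$-dual codes appearing in the statement, and then massaging the resulting rank expressions using two elementary facts: entrywise application of $\pi_{s}$ preserves the rank of a matrix (it is a field automorphism, hence a ring homomorphism that commutes with transposition and inversion), and $\mathrm{rank}(X)=\mathrm{rank}(X^{\top})$. The two structural inputs I would rely on are Lemma~\ref{lemma1}, which gives that $\pi_{s}(H)(M_{\tau}^{-1})^{\top}$ generates $\mathcal{C}^{\bot_{\sigma}}$, and the identity $\mathcal{C}^{\bot_{\sigma}}=(\sigma(\mathcal{C}))^{\bot_{E}}$ recorded after Definition~\ref{definition2.2}. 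The latter says that $\sigma(\mathcal{C})$ is the Euclidean dual of $\mathcal{C}^{\bot_{\sigma}}$, so the generator matrix $\pi_{s}(G)M_{\tau}$ of $\sigma(\mathcal{C})$ doubles as a parity check matrix of $\mathcal{C}^{\bot_{\sigma}}$. Having both a generator matrix and a parity check matrix for every $\sigma$-dual in hand is exactly what is needed to invoke the two formulas of Lemma~\ref{lemma2}. The dimension counts I need ($\mathrm{dim}_{\mathbb{F}_{q}}(\mathcal{C}^{\bot_{\sigma}})=n-\mathrm{dim}_{\mathbb{F}_{q}}(\mathcal{C})$, hence $\mathrm{dim}_{\mathbb{F}_{q}}((\mathcal{C}_{1}^{\bot_{\sigma}})^{\bot_{\sigma}})=k_{1}$) follow at once from this identity and the fact that $\sigma$ is a bijective isometry.

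For part (1), I would set $\mathcal{D}=\mathcal{C}_{2}^{\bot_{\sigma}}$, of dimension $n-k_{2}$, with generator matrix $\pi_{s}(H_{2})(M_{\tau}^{-1})^{\top}$ and parity check matrix $\pi_{s}(G_{2})M_{\tau}$. Applying Lemma~\ref{lemma2}(1) to $\mathcal{C}_{1}$ and $\mathcal{D}$ gives $(n-k_{2})-\mathrm{rank}(H_{1}(\pi_{s}(H_{2})(M_{\tau}^{-1})^{\top})^{\top})$, and transposing the inner matrix rewrites this as $n-k_{2}-\mathrm{rank}(H_{1}M_{\tau}^{-1}\pi_{s}(H_{2})^{\top})$. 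Applying Lemma~\ref{lemma2}(2) instead, using the parity check matrix $\pi_{s}(G_{2})M_{\tau}$ of $\mathcal{D}$, gives $k_{1}-\mathrm{rank}(G_{1}M_{\tau}^{\top}\pi_{s}(G_{2})^{\top})$, and one transpose turns the rank into $\mathrm{rank}(\pi_{s}(G_{2})M_{\tau}G_{1}^{\top})$. This yields both equalities of (1).

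For part (2) the extra work is producing matrices for the double dual $\mathcal{B}=(\mathcal{C}_{1}^{\bot_{\sigma}})^{\bot_{\sigma}}$. A generator matrix of $\mathcal{C}_{1}^{\bot_{\sigma}}$ is $\pi_{s}(H_{1})(M_{\tau}^{-1})^{\top}$, while a parity check matrix of $\mathcal{C}_{1}^{\bot_{\sigma}}$ is the generator matrix $\pi_{s}(G_{1})M_{\tau}$ of $\sigma(\mathcal{C}_{1})$. Feeding these through Lemma~\ref{lemma1} and the generator-of-$\sigma(\cdot)$ rule a second time yields a generator matrix $\pi_{s}(\pi_{s}(G_{1})M_{\tau})(M_{\tau}^{-1})^{\top}$ and a parity check matrix $\pi_{s}(\pi_{s}(H_{1})(M_{\tau}^{-1})^{\top})M_{\tau}$ of $\mathcal{B}$. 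I would then apply the two parts of Lemma~\ref{lemma2} to $\mathcal{B}$ and $\mathcal{D}=\mathcal{C}_{2}^{\bot_{\sigma}}$. In each resulting product the inner monomial factors cancel, since $(M_{\tau}^{-1})^{\top}M_{\tau}^{\top}=I$ in the generator computation and $M_{\tau}M_{\tau}^{-1}=I$ in the parity check computation, and the nested automorphism $\pi_{s}\circ\pi_{s}$ factors out as a single $\pi_{s}$ applied to the whole product. Because both $\pi_{s}$ and transposition preserve rank, the outer $\pi_{s}$ vanishes and the expressions collapse to $k_{1}-\mathrm{rank}(\pi_{s}(G_{1})M_{\tau}G_{2}^{\top})$ and $n-k_{2}-\mathrm{rank}(H_{2}M_{\tau}^{-1}\pi_{s}(H_{1})^{\top})$, as claimed.

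The step I expect to be the main obstacle is the bookkeeping in part (2): correctly reading off a parity check matrix of a $\sigma$-dual as a generator matrix of $\sigma(\mathcal{C})$, and then controlling the nested $\pi_{s}\circ\pi_{s}$ that arises from applying Lemma~\ref{lemma1} twice. The clean closing of the argument hinges on the two cancellations $(M_{\tau}^{-1})^{\top}M_{\tau}^{\top}=M_{\tau}M_{\tau}^{-1}=I$ together with the rank-invariance of $\pi_{s}$ and of transposition, which together reduce the superficially more complicated double-dual rank expressions to the single-$\pi_{s}$ forms stated in the lemma.
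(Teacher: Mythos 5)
Your proposal is correct and follows essentially the same route as the paper: both parts are reduced to Lemma \ref{lemma2} by producing a generator matrix of each $\sigma$-dual via Lemma \ref{lemma1} and a parity check matrix via the identity $\mathcal{C}^{\bot_{\sigma}}=(\sigma(\mathcal{C}))^{\bot}$, and then simplifying using the cancellations $(M_{\tau}^{-1})^{\top}M_{\tau}^{\top}=M_{\tau}M_{\tau}^{-1}=I$ together with the rank-invariance of $\pi_{s}$ and of transposition. The matrices you write down for the double dual $(\mathcal{C}_{1}^{\bot_{\sigma}})^{\bot_{\sigma}}$ and the resulting rank computations coincide with those in the paper's proof.
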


\begin{proof}
(1) By Lemmas \ref{lemma1} and \ref{lemma2} (1), we obtain
\vspace{-6pt}
\begin{align*}
\mathrm{dim}_{\mathbb{F}_{q}}(\mathcal{C}_{1}\cap\mathcal{C}_{2}^{\bot_{\sigma}})
=n-k_{2}-\mathrm{rank}(H_{1}(\pi_{s}(H_{2})(M_{\tau}^{-1})^{\top})^{\top})
=n-k_{2}-\mathrm{rank}(H_{1}M_{\tau}^{-1}\pi_{s}(H_{2})^{\top}).
\end{align*}
On the other hand, since $\mathcal{C}_{2}^{\bot_{\sigma}}=\sigma(\mathcal{C}_{2})^{\bot}$,
then $\mathcal{C}_{2}^{\bot_{\sigma}}$ has a parity check matrix $\sigma(G_{2})$, namely, $\pi_{s}(G_{2})M_{\tau}$.
Using Lemma \ref{lemma2} (2), we obtain
\vspace{-6pt}
\begin{align*}
\mathrm{dim}_{\mathbb{F}_{q}}(\mathcal{C}_{1}\cap\mathcal{C}_{2}^{\bot_{\sigma}})
=k_{1}-\mathrm{rank}(G_{1}(\pi_{s}(G_{2})M_{\tau})^{\top})
=k_{1}-\mathrm{rank}(\pi_{s}(G_{2})M_{\tau}G_{1}^{\top}).
\end{align*}

(2) Since $(\mathcal{C}_{1}^{\bot_{\sigma}})^{\bot_{\sigma}}=\sigma(\mathcal{C}_{1}^{\bot_{\sigma}})^{\bot}$, we deduce from Lemma \ref{lemma1} that
$(\mathcal{C}_{1}^{\bot_{\sigma}})^{\bot_{\sigma}}$ has a parity check matrix $\sigma(\pi_{s}(H_{1})(M_{\tau}^{-1})^{\top})$.
Using Lemma \ref{lemma2} (1), we obtain
\vspace{-6pt}
\begin{align*}
\mathrm{dim}_{\mathbb{F}_{q}}((\mathcal{C}_{1}^{\bot_{\sigma}})^{\bot_{\sigma}}\cap\mathcal{C}_{2}^{\bot_{\sigma}})
&=n-k_{2}-\mathrm{rank}(\sigma(\pi_{s}(H_{1})(M_{\tau}^{-1})^{\top})(\pi_{s}(H_{2})(M_{\tau}^{-1})^{\top})^{\top})\\
&=n-k_{2}-\mathrm{rank}(\pi_{s}(\pi_{s}(H_{1})(M_{\tau}^{-1})^{\top})M_{\tau}(\pi_{s}(H_{2})(M_{\tau}^{-1})^{\top})^{\top})\\
&=n-k_{2}-\mathrm{rank}(\pi_{s}(\pi_{s}(H_{1})(M_{\tau}^{-1})^{\top})\pi_{s}(H_{2})^{\top})\\
&=n-k_{2}-\mathrm{rank}(\pi_{s}(H_{1})(M_{\tau}^{-1})^{\top}H_{2}^{\top})\\
&=n-k_{2}-\mathrm{rank}(H_{2}M_{\tau}^{-1}\pi_{s}(H_{1})^{\top}).
\end{align*}

On the other hand, since $\mathcal{C}_{1}^{\bot_{\sigma}}$ has a parity check matrix $\sigma(G_{1})$,
we deduce from Lemma \ref{lemma1} that $(\mathcal{C}_{1}^{\bot_{\sigma}})^{\bot_{\sigma}}$ has a generator matrix
$\pi_{s}(\sigma(G_{1}))(M_{\tau}^{-1})^{\top}$. Using Lemma \ref{lemma2} (2), we obtain
\vspace{-6pt}
\begin{align*}
\mathrm{dim}_{\mathbb{F}_{q}}((\mathcal{C}_{1}^{\bot_{\sigma}})^{\bot_{\sigma}}\cap\mathcal{C}_{2}^{\bot_{\sigma}})
&=k_{1}-\mathrm{rank}(\pi_{s}(\sigma(G_{1}))(M_{\tau}^{-1})^{\top}\sigma(G_{2})^{\top})\\
&=k_{1}-\mathrm{rank}(\pi_{s}(\pi_{s}(G_{1})M_{\tau})(M_{\tau}^{-1})^{\top}(\pi_{s}(G_{2})M_{\tau})^{\top})\\
&=k_{1}-\mathrm{rank}(\pi_{s}(\pi_{s}(G_{1})M_{\tau})\pi_{s}(G_{2})^{\top})\\
&=k_{1}-\mathrm{rank}(\pi_{s}(G_{1})M_{\tau}G_{2}^{\top}). \tag*{\ensuremath{\square}}
\end{align*}
\end{proof}

\begin{corollary}\label{corollary3.2}
Let $\mathcal{C}$ be an $[n,k]_{q}$ linear code with generator matrix $G$ and parity check matrix $H$. Set $\sigma=(\tau,\pi_{s})\in\mathrm{SLAut}(\mathbb{F}_{q}^{n})$,
where $\tau$ corresponds to a monomial matrix $M_{\tau}=DP_{\tau}\in\mathbb{F}_{q}^{n\times n}$ and $1\leq s\leq e$. Then
\begin{align*}
\mathrm{dim}_{\mathbb{F}_{q}}(\mathrm{Hull}_{\sigma}(\mathcal{C}))
=\mathrm{dim}_{\mathbb{F}_{q}}(\mathrm{Hull}_{\sigma}(\mathcal{C}^{\bot_{\sigma}}))=n-k-\mathrm{rank}(HM_{\tau}^{-1}\pi_{s}(H)^{\top})
=k-\mathrm{rank}(\pi_{s}(G)M_{\tau}G^{\top}).
\end{align*}
\end{corollary}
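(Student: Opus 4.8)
The plan is to derive the corollary directly from Lemma \ref{lemma3.1} by specializing both of its parts to the case $\mathcal{C}_{1}=\mathcal{C}_{2}=\mathcal{C}$. The key observation is that the two quantities whose equality must be established, namely $\mathrm{dim}_{\mathbb{F}_{q}}(\mathrm{Hull}_{\sigma}(\mathcal{C}))$ and $\mathrm{dim}_{\mathbb{F}_{q}}(\mathrm{Hull}_{\sigma}(\mathcal{C}^{\bot_{\sigma}}))$, arise respectively as the intersection dimensions computed in parts (1) and (2) of that lemma, and that both specializations yield the very same pair of rank expressions.

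First I would recall that $\mathrm{Hull}_{\sigma}(\mathcal{C})=\mathcal{C}\cap\mathcal{C}^{\bot_{\sigma}}$. Setting $\mathcal{C}_{1}=\mathcal{C}_{2}=\mathcal{C}$, so that $G_{1}=G_{2}=G$, $H_{1}=H_{2}=H$, and $k_{1}=k_{2}=k$, in Lemma \ref{lemma3.1} (1) immediately gives
\[
\mathrm{dim}_{\mathbb{F}_{q}}(\mathrm{Hull}_{\sigma}(\mathcal{C}))=n-k-\mathrm{rank}(HM_{\tau}^{-1}\pi_{s}(H)^{\top})=k-\mathrm{rank}(\pi_{s}(G)M_{\tau}G^{\top}),
\]
which already supplies the two rank formulas appearing on the right-hand side of the statement.

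Next I would rewrite the $\sigma$ hull of the dual code. Since $\mathrm{Hull}_{\sigma}(\mathcal{C}^{\bot_{\sigma}})=\mathcal{C}^{\bot_{\sigma}}\cap(\mathcal{C}^{\bot_{\sigma}})^{\bot_{\sigma}}=(\mathcal{C}^{\bot_{\sigma}})^{\bot_{\sigma}}\cap\mathcal{C}^{\bot_{\sigma}}$, this intersection is precisely the one evaluated in Lemma \ref{lemma3.1} (2). Specializing that part again to $\mathcal{C}_{1}=\mathcal{C}_{2}=\mathcal{C}$ produces
\[
\mathrm{dim}_{\mathbb{F}_{q}}(\mathrm{Hull}_{\sigma}(\mathcal{C}^{\bot_{\sigma}}))=n-k-\mathrm{rank}(HM_{\tau}^{-1}\pi_{s}(H)^{\top})=k-\mathrm{rank}(\pi_{s}(G)M_{\tau}G^{\top}),
\]
the identical pair of expressions. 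Comparing the two displays yields the full chain of equalities asserted in the corollary.

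There is no genuine obstacle here beyond correctly matching the abstract intersections of Lemma \ref{lemma3.1} to the two hulls; the only point requiring care is recognizing that the term $(\mathcal{C}_{1}^{\bot_{\sigma}})^{\bot_{\sigma}}\cap\mathcal{C}_{2}^{\bot_{\sigma}}$ of part (2), once one sets $\mathcal{C}_{1}=\mathcal{C}_{2}=\mathcal{C}$, is exactly $\mathrm{Hull}_{\sigma}(\mathcal{C}^{\bot_{\sigma}})$. Everything else is a direct substitution into results already proved, so the corollary follows with no additional computation.
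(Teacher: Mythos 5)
Your proposal is correct and follows exactly the paper's own route: the paper likewise proves the corollary by setting $\mathcal{C}_{1}=\mathcal{C}_{2}=\mathcal{C}$ in Lemma \ref{lemma3.1}, so that part (1) yields $\mathrm{dim}_{\mathbb{F}_{q}}(\mathrm{Hull}_{\sigma}(\mathcal{C}))$ and part (2) yields $\mathrm{dim}_{\mathbb{F}_{q}}(\mathrm{Hull}_{\sigma}(\mathcal{C}^{\bot_{\sigma}}))$, both equal to the same two rank expressions. Your write-up simply makes explicit the identification $\mathrm{Hull}_{\sigma}(\mathcal{C}^{\bot_{\sigma}})=(\mathcal{C}^{\bot_{\sigma}})^{\bot_{\sigma}}\cap\mathcal{C}^{\bot_{\sigma}}$ that the paper leaves implicit.
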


\begin{proof}
In Lemma \ref{lemma3.1}, by setting $\mathcal{C}_{1}=\mathcal{C}_{2}=\mathcal{C}=[n,k]_{q}$ with generator matrix $G$ and parity check matrix $H$,
the result then follows. $\hfill\square$
\end{proof}

\vspace{6pt}

The following result is a corollary of \cite[Theorem 4]{Galindo2019Entanglement-assisted}, demonstrating the method for generating $q$-ary EAQECCs from $q$-ary classical linear codes.

\begin{lemma}\label{lemma3.3}
(\!\! \cite{Galindo2019Entanglement-assisted})
Let $\mathcal{C}_{i}$ be an $[n,k_{i},d_{i}]_{q}$ linear code with parity check matrix $H_{i}$ for $i=1,2$. Then there exists an
$[[n,k_{1}+k_{2}-n+c,\min\{d_{1},d_{2}\};c]]_{q}$ EAQECC, where $c=\mathrm{rank}(H_{1}H_{2}^{\top})$ is the required number of maximally entangled states.
\end{lemma}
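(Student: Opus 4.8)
The plan is to obtain this statement as a direct specialization of the general entanglement-assisted construction in \cite[Theorem 4]{Galindo2019Entanglement-assisted}, so that essentially all of the work lies in choosing the correct input data and matching the resulting parameters. First I would recall that the CSS-type entanglement-assisted construction takes a pair of classical parity check matrices over $\mathbb{F}_{q}$ --- one serving as the bit-type checks and one as the phase-type checks --- and assembles them into a stabilizer description in the symplectic formalism. Here I would feed in the pair $(H_{1},H_{2})$, where $H_{1}$ is the $(n-k_{1})\times n$ parity check matrix of $\mathcal{C}_{1}$ and $H_{2}$ is the $(n-k_{2})\times n$ parity check matrix of $\mathcal{C}_{2}$, and then invoke Theorem 4 to produce an EAQECC of length $n$.

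Next I would read off and verify the three output parameters. The number of maximally entangled states is governed by the rank of the matrix recording the symplectic overlaps between the two families of checks; for the pair $(H_{1},H_{2})$ this overlap matrix is precisely $H_{1}H_{2}^{\top}$, so $c=\mathrm{rank}(H_{1}H_{2}^{\top})$. For the dimension, the number of encoded symbols equals $n$ minus the number of independent stabilizer generators plus the ebit count; since $H_{1}$ and $H_{2}$ are genuine parity check matrices they have full ranks $n-k_{1}$ and $n-k_{2}$, giving $n-(n-k_{1})-(n-k_{2})+c=k_{1}+k_{2}-n+c$. Finally, Theorem 4 already supplies the distance bound $\min\{d_{1},d_{2}\}$, which I would confirm by noting that every nontrivial undetectable error of the resulting code corresponds to a nonzero codeword of $\mathcal{C}_{1}$ or of $\mathcal{C}_{2}$ and hence has weight at least $\min\{d_{1},d_{2}\}$.

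The step I expect to demand the most care is the bookkeeping that ties Theorem 4's canonical output to the symmetric parameters stated here: I must check that substituting $(H_{1},H_{2})$ reproduces exactly $c=\mathrm{rank}(H_{1}H_{2}^{\top})$ and $k_{1}+k_{2}-n+c$, that the assignment of which matrix plays the bit-type versus phase-type role is consistent with the symmetric form of the final parameters, and that no extra rank or containment hypotheses are silently required. Once this identification is in place, the conclusion follows immediately from \cite[Theorem 4]{Galindo2019Entanglement-assisted}, completing the proof.
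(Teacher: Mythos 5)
Your proposal is correct and matches the paper's treatment: the paper gives no independent proof of this lemma, stating it as a direct corollary of \cite[Theorem 4]{Galindo2019Entanglement-assisted}, which is exactly the specialization-and-bookkeeping route you take. Your parameter checks (the ebit count $c=\mathrm{rank}(H_{1}H_{2}^{\top})$, the dimension $k_{1}+k_{2}-n+c$ from the full-rank parity check matrices, and the distance bound $\min\{d_{1},d_{2}\}$) are the right ones and agree with the cited result.
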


Let us now present the general method for constructing two types of EAQECCs by using the $\sigma$ hulls of linear codes in the following theorem.

\begin{theorem}\label{theorem4.4}
Let $\mathcal{C}$ be an $[n,k,d]_{q}$ linear code whose $\sigma$ dual code has minimum distance $d'$. Then the following statements hold.
\vspace{-5pt}
\begin{itemize}
\item [(1)] There exists an
$[[n,k-\mathrm{dim}_{\mathbb{F}_{q}}(\mathrm{Hull}_{\sigma}(\mathcal{C})),d;n-k-\mathrm{dim}_{\mathbb{F}_{q}}(\mathrm{Hull}_{\sigma}(\mathcal{C}))]]_{q}$ EAQECC;

\vspace{-4pt}

\item [(2)] There exists an
$[[n,n-k-\mathrm{dim}_{\mathbb{F}_{q}}(\mathrm{Hull}_{\sigma}(\mathcal{C})),d';k-\mathrm{dim}_{\mathbb{F}_{q}}(\mathrm{Hull}_{\sigma}(\mathcal{C}))]]_{q}$ EAQECC.
\end{itemize}
\end{theorem}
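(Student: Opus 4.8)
The plan is to reduce both statements to the EAQECC construction recorded in Lemma~\ref{lemma3.3}, which turns any pair of classical codes $\mathcal{C}_{1},\mathcal{C}_{2}$ with parity-check matrices $H_{1},H_{2}$ into an EAQECC whose entanglement parameter is $c=\mathrm{rank}(H_{1}H_{2}^{\top})$. The whole idea is to pick the pair so that this Euclidean rank coincides with the $\sigma$-twisted rank $\mathrm{rank}(HM_{\tau}^{-1}\pi_{s}(H)^{\top})$ that Corollary~\ref{corollary3.2} equates with $n-k-\mathrm{dim}_{\mathbb{F}_{q}}(\mathrm{Hull}_{\sigma}(\mathcal{C}))$. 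Throughout I write $h:=\mathrm{dim}_{\mathbb{F}_{q}}(\mathrm{Hull}_{\sigma}(\mathcal{C}))$.

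For statement~(1) I would apply Lemma~\ref{lemma3.3} with $\mathcal{C}_{1}=\mathcal{C}$ and $\mathcal{C}_{2}=\sigma(\mathcal{C})$. First I would check that $\sigma(\mathcal{C})$ is again an $[n,k,d]_{q}$ linear code: it is a subspace because $\sigma$ is additive and semilinear, it has cardinality $q^{k}$ because $\sigma$ is a bijection, and it has minimum distance $d$ because $\sigma$ is an isometry fixing $\mathbf{0}$. Since $\mathcal{C}^{\bot_{\sigma}}=(\sigma(\mathcal{C}))^{\bot_{E}}$, a parity-check matrix of $\mathcal{C}_{2}=\sigma(\mathcal{C})$ is exactly a generator matrix of $\mathcal{C}^{\bot_{\sigma}}$, which Lemma~\ref{lemma1} identifies as $H_{2}=\pi_{s}(H)(M_{\tau}^{-1})^{\top}$, where $H=H_{1}$ is a parity-check matrix of $\mathcal{C}$. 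Then $H_{1}H_{2}^{\top}=HM_{\tau}^{-1}\pi_{s}(H)^{\top}$, so Lemma~\ref{lemma3.3} gives $c=\mathrm{rank}(HM_{\tau}^{-1}\pi_{s}(H)^{\top})=n-k-h$ by Corollary~\ref{corollary3.2}. Substituting $k_{1}=k_{2}=k$, $d_{1}=d_{2}=d$ and this $c$ produces dimension $2k-n+c=k-h$ and distance $d$, which is precisely statement~(1).

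For statement~(2) I would simply feed the code $\mathcal{C}^{\bot_{\sigma}}$, an $[n,n-k,d']_{q}$ linear code, into the statement~(1) just established. By the symmetry $\mathrm{dim}_{\mathbb{F}_{q}}(\mathrm{Hull}_{\sigma}(\mathcal{C}))=\mathrm{dim}_{\mathbb{F}_{q}}(\mathrm{Hull}_{\sigma}(\mathcal{C}^{\bot_{\sigma}}))$ from Corollary~\ref{corollary3.2}, the $\sigma$ hull of $\mathcal{C}^{\bot_{\sigma}}$ again has dimension $h$, so statement~(1) applied to $\mathcal{C}^{\bot_{\sigma}}$ returns an $[[n,(n-k)-h,d';n-(n-k)-h]]_{q}=[[n,n-k-h,d';k-h]]_{q}$ EAQECC, as required. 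Equivalently, one may run Lemma~\ref{lemma3.3} directly on $\mathcal{C}_{1}=\mathcal{C}^{\bot_{\sigma}}$ and $\mathcal{C}_{2}=\sigma(\mathcal{C}^{\bot_{\sigma}})$.

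The verifications that $\sigma(\mathcal{C})$ inherits the parameters of $\mathcal{C}$ and that the transpose bookkeeping collapses $H_{1}H_{2}^{\top}$ to $HM_{\tau}^{-1}\pi_{s}(H)^{\top}$ are routine. The one load-bearing step is the choice of the second code: using $\mathcal{C}^{\bot_{\sigma}}=(\sigma(\mathcal{C}))^{\bot_{E}}$ together with Lemma~\ref{lemma1} to pin down the correct parity-check matrix, so that the Euclidean rank $c$ demanded by Lemma~\ref{lemma3.3} lands exactly on the $\sigma$-hull formula of Corollary~\ref{corollary3.2}. The main thing to watch is keeping the monomial matrix $M_{\tau}$ and the Frobenius map $\pi_{s}$ in their correct positions through the transpositions.
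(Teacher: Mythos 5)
Your proposal is correct and follows essentially the same route as the paper's own proof: statement (1) via Lemma~\ref{lemma3.3} applied to $\mathcal{C}_{1}=\mathcal{C}$ and $\mathcal{C}_{2}=\sigma(\mathcal{C})$ with the parity-check matrix $\pi_{s}(H)(M_{\tau}^{-1})^{\top}$ from Lemma~\ref{lemma1} and the rank identity of Corollary~\ref{corollary3.2}, and statement (2) by applying (1) to $\mathcal{C}^{\bot_{\sigma}}$ together with $\mathrm{dim}_{\mathbb{F}_{q}}(\mathrm{Hull}_{\sigma}(\mathcal{C}))=\mathrm{dim}_{\mathbb{F}_{q}}(\mathrm{Hull}_{\sigma}(\mathcal{C}^{\bot_{\sigma}}))$. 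Your explicit check that $\sigma(\mathcal{C})$ retains the parameters $[n,k,d]_{q}$ is a detail the paper leaves implicit, but it does not change the argument.
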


\begin{proof}
(1) In Lemma \ref{lemma3.3}, let $\mathcal{C}_{1}=\mathcal{C}$ be an $[n,k,d]_{q}$ with parity check matrix $H$, and let $\mathcal{C}_{2}=\sigma(\mathcal{C})$.
Then, it follows from $\mathcal{C}^{\bot_{\sigma}}=(\sigma(\mathcal{C}))^{\bot}$ and Lemma \ref{lemma1} that $\pi_{s}(H)(M_{\tau}^{-1})^{\top}$ is a parity check matrix of $\mathcal{C}_{2}$.
By Lemma \ref{lemma3.3}, we can obtain an $[[n,2k-n+c,d;c]]_{q}$ EAQECC, with the required number of maximally entangled states determined as follows:
\vspace{-4pt}
\begin{align*}
c&=\mathrm{rank}(H(\pi_{s}(H)(M_{\tau}^{-1})^{\top})^{\top})\\
&=\mathrm{rank}(HM_{\tau}^{-1}\pi_{s}(H)^{\top})\\
&=n-k-\mathrm{dim}_{\mathbb{F}_{q}}(\mathrm{Hull}_{\sigma}(\mathcal{C})),
\end{align*}
where the third equation follows from Corollary \ref{corollary3.2}.
Substituting it into the parameters of the $[[n,2k-n+c,d;c]]_{q}$ EAQECC, we can obtain the EAQECC described in statement (1).

(2) Statement (1) reveals that we can derive an
$[[n,k-\mathrm{dim}_{\mathbb{F}_{q}}(\mathrm{Hull}_{\sigma}(\mathcal{C})),d;n-k-\mathrm{dim}_{\mathbb{F}_{q}}(\mathrm{Hull}_{\sigma}(\mathcal{C}))]]_{q}$ EAQECC
from an $[n,k,d]_{q}$ linear code $\mathcal{C}$.
By replacing $\mathcal{C}$ with $\mathcal{C}^{\bot_{\sigma}}$, which has a minimum distance of $d'$, we can then obtain an
$[[n,n-k-\mathrm{dim}_{\mathbb{F}_{q}}(\mathrm{Hull}_{\sigma}(\mathcal{C}^{\bot_{\sigma}})),d';
k-\mathrm{dim}_{\mathbb{F}_{q}}(\mathrm{Hull}_{\sigma}(\mathcal{C}^{\bot_{\sigma}}))]]_{q}$ EAQECC.
Considering $\mathrm{dim}_{\mathbb{F}_{q}}(\mathrm{Hull}_{\sigma}(\mathcal{C}))=\mathrm{dim}_{\mathbb{F}_{q}}(\mathrm{Hull}_{\sigma}(\mathcal{C}^{\bot_{\sigma}}))$ from Corollary \ref{corollary3.2}, we therefore obtain the EAQECC described in statement (2). $\hfill\square$
\end{proof}

\vspace{6pt}

Next, we give a useful result, which will be used for constructing EAQECCs with flexible parameters.

\begin{theorem}\label{theorem4.5}
Let $q>2$ and let $\mathcal{C}_{i}$ be an $[n,k_{i}]_{q}$ linear code for $i=1,2$. Set $\sigma=(\tau,\pi_{s})\in\mathrm{SLAut}(\mathbb{F}_{q}^{n})$, where $\tau$ corresponds to a monomial matrix $M_{\tau}=DP_{\tau}\in\mathbb{F}_{q}^{n\times n}$ and $1\leq s\leq e$. Then for any integer $h$ with
$\mathrm{max}\{0,k_{1}-k_{2}\}\leq h\leq \mathrm{dim}_{\mathbb{F}_{q}}(\mathcal{C}_{1}\cap\mathcal{C}_{2}^{\bot_{\sigma}})$, there exists a linear code
$\mathcal{C}_{2,h}$ monomially equivalent to $\mathcal{C}_{2}$ such that
\begin{align*}
\mathrm{dim}_{\mathbb{F}_{q}}(\mathcal{C}_{1}\cap\mathcal{C}_{2,h}^{\bot_{\sigma}})=h.
\end{align*}
\end{theorem}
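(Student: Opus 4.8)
The plan is to reduce the statement to its Euclidean counterpart, Lemma \ref{lemma2.6}, by transporting everything across the isometry $\sigma$. The key fact I would use is the identity $\mathcal{C}_{2}^{\bot_{\sigma}}=(\sigma(\mathcal{C}_{2}))^{\bot_{E}}$ recorded just after Definition \ref{definition2.2}. First I would set $\mathcal{D}_{2}:=\sigma(\mathcal{C}_{2})$ and observe that, since $\pi_{s}$ is a field automorphism and $M_{\tau}$ is monomial, $\mathcal{D}_{2}$ is again an $[n,k_{2}]_{q}$ linear code with $\mathcal{D}_{2}^{\bot_{E}}=\mathcal{C}_{2}^{\bot_{\sigma}}$. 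Consequently $\dim_{\mathbb{F}_{q}}(\mathcal{C}_{1}\cap\mathcal{D}_{2}^{\bot_{E}})=\dim_{\mathbb{F}_{q}}(\mathcal{C}_{1}\cap\mathcal{C}_{2}^{\bot_{\sigma}})$, so the admissible range $\max\{0,k_{1}-k_{2}\}\le h\le\dim_{\mathbb{F}_{q}}(\mathcal{C}_{1}\cap\mathcal{C}_{2}^{\bot_{\sigma}})$ for $h$ is exactly the range that Lemma \ref{lemma2.6} produces for the pair $(\mathcal{C}_{1},\mathcal{D}_{2})$ (note $\dim_{\mathbb{F}_{q}}\mathcal{D}_{2}=k_{2}$).

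Next I would invoke Lemma \ref{lemma2.6} (this is where the hypothesis $q>2$ enters) for the codes $\mathcal{C}_{1}$ and $\mathcal{D}_{2}$: for the given $h$ there is a code $\mathcal{D}_{2,h}$ monomially equivalent to $\mathcal{D}_{2}$, say $\mathcal{D}_{2,h}=\mathcal{D}_{2}M$ for a monomial matrix $M$, satisfying $\dim_{\mathbb{F}_{q}}(\mathcal{C}_{1}\cap\mathcal{D}_{2,h}^{\bot_{E}})=h$. The remaining task is to pull $\mathcal{D}_{2,h}$ back across $\sigma$ into a monomial deformation of $\mathcal{C}_{2}$ itself.

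For this I would write down the preimage explicitly: put $Q:=\pi_{s}^{-1}(M_{\tau}MM_{\tau}^{-1})$ and define $\mathcal{C}_{2,h}:=\mathcal{C}_{2}Q$. A short computation using $\sigma(\mathbf{c}Q)=\pi_{s}(\mathbf{c})\pi_{s}(Q)M_{\tau}$ together with $\pi_{s}(Q)=M_{\tau}MM_{\tau}^{-1}$ gives $\sigma(\mathbf{c}Q)=\pi_{s}(\mathbf{c})M_{\tau}M=\sigma(\mathbf{c})M$, hence $\sigma(\mathcal{C}_{2,h})=\sigma(\mathcal{C}_{2})M=\mathcal{D}_{2}M=\mathcal{D}_{2,h}$. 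Therefore $\mathcal{C}_{2,h}^{\bot_{\sigma}}=(\sigma(\mathcal{C}_{2,h}))^{\bot_{E}}=\mathcal{D}_{2,h}^{\bot_{E}}$, and so $\dim_{\mathbb{F}_{q}}(\mathcal{C}_{1}\cap\mathcal{C}_{2,h}^{\bot_{\sigma}})=\dim_{\mathbb{F}_{q}}(\mathcal{C}_{1}\cap\mathcal{D}_{2,h}^{\bot_{E}})=h$, which is the desired conclusion.

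The one genuine point to verify — and the step I expect to be the main obstacle — is that $\mathcal{C}_{2,h}$ is truly \emph{monomially} equivalent to $\mathcal{C}_{2}$, i.e.\ that $Q$ is a monomial matrix. This holds because the monomial matrices form a group closed under products and inverses, so $M_{\tau}MM_{\tau}^{-1}$ is monomial, and the entrywise automorphism $\pi_{s}^{-1}$ preserves the support of a matrix and sends nonzero scalars to nonzero scalars, so $Q=\pi_{s}^{-1}(M_{\tau}MM_{\tau}^{-1})$ is again monomial. In effect, conjugation by $M_{\tau}$ and the semilinear twist by $\pi_{s}$ each preserve monomial equivalence, and it is precisely this closure property that lets the Euclidean result of Lemma \ref{lemma2.6} be lifted to the $\sigma$ setting without altering the type of the equivalence.
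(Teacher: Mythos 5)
Your proposal is correct and is essentially the paper's own proof: both reduce to Lemma \ref{lemma2.6} applied to $\mathcal{C}_{1}$ and $\sigma(\mathcal{C}_{2})=\pi_{s}(\mathcal{C}_{2})M_{\tau}$, and then pull the resulting monomial matrix $M$ back by conjugating with $M_{\tau}$ and applying the inverse field automorphism (your $Q=\pi_{s}^{-1}(M_{\tau}MM_{\tau}^{-1})$ is exactly the paper's $M_{\tau''}=\pi_{e-s}(M_{\tau}M_{\tau'}M_{\tau}^{-1})$, since $\pi_{s}^{-1}=\pi_{e-s}$). The closure argument you flag as the main point to verify is likewise the same one the paper relies on.
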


\begin{proof}
Since
\begin{align*}
\mathrm{dim}_{\mathbb{F}_{q}}(\mathcal{C}_{1}\cap\mathcal{C}_{2}^{\bot_{\sigma}})
=\mathrm{dim}_{\mathbb{F}_{q}}(\mathcal{C}_{1}\cap\sigma(\mathcal{C}_{2})^{\bot})
=\mathrm{dim}_{\mathbb{F}_{q}}(\mathcal{C}_{1}\cap(\pi_{s}(\mathcal{C}_{2})M_{\tau})^{\bot}),
\end{align*}
then by using Lemma \ref{lemma2.6}, there exists a monomial matrix $M_{\tau'}$ such that
\begin{align*}
\mathrm{dim}_{\mathbb{F}_{q}}(\mathcal{C}_{1}\cap(\pi_{s}(\mathcal{C}_{2})M_{\tau}M_{\tau'})^{\bot})
=h.
\end{align*}
Putting $M_{\tau''}=\pi_{e-s}(M_{\tau}M_{\tau'}M_{\tau}^{-1})$, then $M_{\tau''}$ is monomial, and $\pi_{s}(M_{\tau''})M_{\tau}=M_{\tau}M_{\tau'}$.
Taking $\mathcal{C}_{2,h}=\mathcal{C}_{2}M_{\tau''}$, then $\mathcal{C}_{2,h}$ is monomially equivalent to $\mathcal{C}_{2}$, and we have
\vspace{-6pt}
\begin{align*}
\mathrm{dim}_{\mathbb{F}_{q}}(\mathcal{C}_{1}\cap\mathcal{C}_{2,h}^{\bot_{\sigma}})
&=\mathrm{dim}_{\mathbb{F}_{q}}(\mathcal{C}_{1}\cap(\pi_{s}(\mathcal{C}_{2,h})M_{\tau})^{\bot})\\
&=\mathrm{dim}_{\mathbb{F}_{q}}(\mathcal{C}_{1}\cap(\pi_{s}(\mathcal{C}_{2}M_{\tau''})M_{\tau})^{\bot})\\
&=\mathrm{dim}_{\mathbb{F}_{q}}(\mathcal{C}_{1}\cap(\pi_{s}(\mathcal{C}_{2})\pi_{s}(M_{\tau''})M_{\tau})^{\bot})\\
&=\mathrm{dim}_{\mathbb{F}_{q}}(\mathcal{C}_{1}\cap(\pi_{s}(\mathcal{C}_{2})M_{\tau}M_{\tau'})^{\bot})\\
&=h,
\end{align*}
which completes the proof. $\hfill\square$
\end{proof}

\begin{corollary}\label{corollary4.6}
Let $q>2$ and let $\mathcal{C}$ be an $[n,k,d]_{q}$ linear code. Set $\sigma=(\tau,\pi_{s})\in\mathrm{SLAut}(\mathbb{F}_{q}^{n})$, where $\tau$ corresponds to a monomial matrix $M_{\tau}=DP_{\tau}\in\mathbb{F}_{q}^{n\times n}$ and $1\leq s\leq e$. Then
there exists a monomially equivalent $[n,k,d]_{q}$ linear code $\mathcal{C}_{h}$ with
$\mathrm{dim}_{\mathbb{F}_{q}}(\mathrm{Hull}_{\sigma}(\mathcal{C}_{h}))=h$ for every integer $h$ with
$0\leq h\leq \mathrm{dim}_{\mathbb{F}_{q}}(\mathrm{Hull}_{\sigma}(\mathcal{C}))$.
\end{corollary}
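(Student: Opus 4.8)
The plan is to obtain Corollary \ref{corollary4.6} by specializing Theorem \ref{theorem4.5} to the case $\mathcal{C}_{1}=\mathcal{C}_{2}=\mathcal{C}$. Then $k_{1}=k_{2}=k$, so $\mathrm{max}\{0,k_{1}-k_{2}\}=0$, and by the definition of the $\sigma$ hull one has $\mathrm{dim}_{\mathbb{F}_{q}}(\mathcal{C}_{1}\cap\mathcal{C}_{2}^{\bot_{\sigma}})=\mathrm{dim}_{\mathbb{F}_{q}}(\mathcal{C}\cap\mathcal{C}^{\bot_{\sigma}})=\mathrm{dim}_{\mathbb{F}_{q}}(\mathrm{Hull}_{\sigma}(\mathcal{C}))$. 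Hence the admissible interval for $h$ in Theorem \ref{theorem4.5} collapses to exactly $0\leq h\leq \mathrm{dim}_{\mathbb{F}_{q}}(\mathrm{Hull}_{\sigma}(\mathcal{C}))$, matching the range claimed here. For each such $h$, Theorem \ref{theorem4.5} produces a code $\mathcal{C}_{h}$ monomially equivalent to $\mathcal{C}$ with $\mathrm{dim}_{\mathbb{F}_{q}}(\mathcal{C}\cap\mathcal{C}_{h}^{\bot_{\sigma}})=h$; since a monomial map only permutes and rescales coordinates by nonzero scalars it preserves Hamming weights, so $\mathcal{C}_{h}$ is automatically an $[n,k,d]_{q}$ code and the parameters are settled.

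The step I expect to require the most care is upgrading the \emph{relative} quantity $\mathrm{dim}_{\mathbb{F}_{q}}(\mathcal{C}\cap\mathcal{C}_{h}^{\bot_{\sigma}})$, in which only the second code has been transformed, to the genuine self-hull $\mathrm{dim}_{\mathbb{F}_{q}}(\mathrm{Hull}_{\sigma}(\mathcal{C}_{h}))=\mathrm{dim}_{\mathbb{F}_{q}}(\mathcal{C}_{h}\cap\mathcal{C}_{h}^{\bot_{\sigma}})$, which pairs $\mathcal{C}_{h}$ against itself. Writing $\mathcal{C}_{h}=\mathcal{C}M$ for the monomial matrix $M$ supplied by the proof of Theorem \ref{theorem4.5} and using $GM$ as a generator matrix of $\mathcal{C}_{h}$, Corollary \ref{corollary3.2} gives $\mathrm{dim}_{\mathbb{F}_{q}}(\mathrm{Hull}_{\sigma}(\mathcal{C}_{h}))=k-\mathrm{rank}(\pi_{s}(G)\pi_{s}(M)M_{\tau}M^{\top}G^{\top})$, whereas Lemma \ref{lemma3.1}(1) identifies the relative quantity as $k-\mathrm{rank}(\pi_{s}(G)\pi_{s}(M)M_{\tau}G^{\top})$. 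The two expressions differ by the extra factor $M^{\top}$, so the identification is not automatic and must be justified rather than merely quoted from Theorem \ref{theorem4.5}.

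To close this gap I would work on the rank side directly instead of invoking Theorem \ref{theorem4.5} verbatim. Because the permutation part of a monomial matrix acts orthogonally and leaves the hull dimension unchanged, it suffices to search among diagonal transformations $\mathcal{C}_{h}=\mathcal{C}D$ with $D=\mathrm{diag}(d_{1},\ldots,d_{n})$; by Corollary \ref{corollary3.2} the task then becomes the realization of every value in the relevant range as $k-\mathrm{rank}(\pi_{s}(G)\pi_{s}(D)M_{\tau}D\,G^{\top})$. Starting from $D=I_{n}$, where this rank records $\mathrm{dim}_{\mathbb{F}_{q}}(\mathrm{Hull}_{\sigma}(\mathcal{C}))$, I would rescale one coordinate $d_{i}$ at a time and show that, as long as the current $\sigma$ hull is nonzero, an admissible choice of $d_{i}$ strictly raises the rank by exactly one; iterating drives the hull dimension down through every intermediate value to $0$. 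The crux of this last step, and the place where the hypothesis $q>2$ is genuinely used, is guaranteeing that some single admissible rescaling strictly increases the rank of $\pi_{s}(G)\pi_{s}(D)M_{\tau}D\,G^{\top}$ whenever the hull is nontrivial; this is an abundance-of-scalars argument of exactly the kind that powers Lemma \ref{lemma2.6}, and I would adapt that reasoning to the $\sigma$ self-pairing rather than to a relative pairing between two distinct codes.
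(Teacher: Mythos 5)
Your diagnosis of the central difficulty is correct, and it identifies a real problem: the paper offers no proof of this corollary, presenting it as an immediate consequence of Theorem \ref{theorem4.5} with $\mathcal{C}_{1}=\mathcal{C}_{2}=\mathcal{C}$, and, exactly as you observe, that specialization only controls the relative quantity $\mathrm{dim}_{\mathbb{F}_{q}}(\mathcal{C}\cap\mathcal{C}_{2,h}^{\bot_{\sigma}})$, in which the untransformed code is paired against the transformed one. In terms of generator matrices, the hull of $\mathcal{C}M$ is governed by $\mathrm{rank}\big(\pi_{s}(G)\pi_{s}(M)M_{\tau}M^{\top}G^{\top}\big)$ while Theorem \ref{theorem4.5} controls $\mathrm{rank}\big(\pi_{s}(G)\pi_{s}(M)M_{\tau}G^{\top}\big)$, and the factor $M^{\top}$ sandwiched between $M_{\tau}$ and $G^{\top}$ cannot be stripped off. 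So the gap you found in the implicit derivation is genuine, not cosmetic.

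However, the repair you sketch cannot be completed under the stated hypothesis $q>2$: the crux step you isolate --- that some admissible rescaling of a single coordinate strictly increases $\mathrm{rank}\big(\pi_{s}(G)\pi_{s}(D)M_{\tau}DG^{\top}\big)$ whenever the hull is nontrivial --- is not merely delicate but false for some $(q,\sigma)$ allowed by the statement. In the self-pairing the scalar attached to a coordinate enters through a product of the form $d_{i}^{p^{s}}d_{j}$ with $j$ determined by $\tau$ (equal to $d_{i}^{p^{s}+1}$ when $\tau$ fixes $i$), not linearly as in Lemma \ref{lemma2.6}, and this power map can be constant on $\mathbb{F}_{q}^{\ast}$. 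Concretely, take $q=3$, $s=e=1$ and $M_{\tau}=I_{n}$, i.e.\ the Euclidean case: for any monomial matrix $M=DP$ over $\mathbb{F}_{3}$ one has $MM^{\top}=D^{2}=I_{n}$, so $\mathrm{rank}(GMM^{\top}G^{\top})=\mathrm{rank}(GG^{\top})$ and, by Corollary \ref{corollary3.2}, the Euclidean hull dimension is invariant under monomial equivalence; the self-dual tetracode $[4,2,3]_{3}$ therefore admits no monomially equivalent code of hull dimension $0$ or $1$, contradicting the corollary with $h=0$ or $h=1$. (The same collapse occurs for $q=4$ with the Hermitian product, since $d^{2+1}=1$ on $\mathbb{F}_{4}^{\ast}$.) Hence the statement itself needs a stronger hypothesis --- e.g.\ one guaranteeing that the relevant power map is non-constant on $\mathbb{F}_{q}^{\ast}$, such as $q>3$ in the Euclidean case --- and only then does your coordinate-by-coordinate rank-increment argument (which changes the rank by at most one per step, so every intermediate value is hit) go through.
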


Combining Theorem \ref{theorem4.4} and Corollary \ref{corollary4.6}, we directly obtain the following theorem.

\begin{theorem}\label{theorem4.7}
Let $q>2$ and let $\mathcal{C}$ be an $[n,k,d]_{q}$ linear code. Set $\sigma=(\tau,\pi_{s})\in\mathrm{SLAut}(\mathbb{F}_{q}^{n})$, where $\tau$ corresponds to a monomial matrix $M_{\tau}=DP_{\tau}\in\mathbb{F}_{q}^{n\times n}$ and $1\leq s\leq e$. Then there exists a monomially equivalent $[n,k,d]_{q}$ linear code $\mathcal{C}_{h}$, whose $\sigma$ dual code has minimum distance $d'$, such that the following statements hold.
\vspace{-5pt}
\begin{itemize}
\item [(1)] There exists an
$[[n,k-h,d;n-k-h]]_{q}$ EAQECC for every integer $h$ with $0\leq h\leq \mathrm{dim}_{\mathbb{F}_{q}}(\mathrm{Hull}_{\sigma}(\mathcal{C}))$;

\vspace{-4pt}

\item [(2)] There exists an
$[[n,n-k-h,d';k-h]]_{q}$ EAQECC for every integer $h$ with $0\leq h\leq \mathrm{dim}_{\mathbb{F}_{q}}(\mathrm{Hull}_{\sigma}(\mathcal{C}))$.
\end{itemize}
\end{theorem}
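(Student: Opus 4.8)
The plan is to derive Theorem \ref{theorem4.7} by feeding the codes produced by Corollary \ref{corollary4.6} into Theorem \ref{theorem4.4}; the substitution is essentially mechanical, and the only point that requires genuine care is that the minimum distance $d'$ appearing in statement (2) can be taken to be independent of $h$.

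First I would apply Corollary \ref{corollary4.6}. Since $q>2$ and $\mathcal{C}$ is an $[n,k,d]_{q}$ code, for every integer $h$ with $0\leq h\leq \mathrm{dim}_{\mathbb{F}_{q}}(\mathrm{Hull}_{\sigma}(\mathcal{C}))$ there is a code $\mathcal{C}_{h}$ monomially equivalent to $\mathcal{C}$ with $\mathrm{dim}_{\mathbb{F}_{q}}(\mathrm{Hull}_{\sigma}(\mathcal{C}_{h}))=h$. Because monomial equivalence preserves length, dimension and all Hamming weights, $\mathcal{C}_{h}$ is again an $[n,k,d]_{q}$ code and hence meets the hypotheses of Theorem \ref{theorem4.4}. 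Applying Theorem \ref{theorem4.4} to $\mathcal{C}_{h}$ and substituting $\mathrm{dim}_{\mathbb{F}_{q}}(\mathrm{Hull}_{\sigma}(\mathcal{C}_{h}))=h$ into the parameters of its two EAQECCs yields at once an $[[n,k-h,d;n-k-h]]_{q}$ EAQECC, which is statement (1), and an $[[n,n-k-h,d'_{h};k-h]]_{q}$ EAQECC, where $d'_{h}$ denotes the minimum distance of $\mathcal{C}_{h}^{\bot_{\sigma}}$.

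The remaining task, which I expect to be the only nontrivial step, is to show that $d'_{h}$ does not depend on $h$, so that it may be written as a single value $d'$ as in the statement. Writing $\mathcal{C}_{h}=\mathcal{C}M$ for a monomial matrix $M$ and using $\mathcal{C}^{\bot_{\sigma}}=(\sigma(\mathcal{C}))^{\bot_{E}}$, I would compute $\sigma(\mathcal{C}_{h})=\pi_{s}(\mathcal{C}M)M_{\tau}=\pi_{s}(\mathcal{C})\pi_{s}(M)M_{\tau}=\sigma(\mathcal{C})N$, where $N=M_{\tau}^{-1}\pi_{s}(M)M_{\tau}$ is again monomial. Consequently $\mathcal{C}_{h}^{\bot_{\sigma}}=(\sigma(\mathcal{C}_{h}))^{\bot_{E}}$ is monomially equivalent to $(\sigma(\mathcal{C}))^{\bot_{E}}=\mathcal{C}^{\bot_{\sigma}}$, hence has the same minimum distance. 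Setting $d'$ to be the minimum distance of $\mathcal{C}^{\bot_{\sigma}}$ gives $d'_{h}=d'$ for every admissible $h$, which establishes statement (2) and completes the proof. Everything outside this last verification is a direct substitution into the already-established Corollary \ref{corollary4.6} and Theorem \ref{theorem4.4}.
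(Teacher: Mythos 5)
Your proposal is correct and follows exactly the route the paper intends: the paper states Theorem \ref{theorem4.7} as a direct combination of Corollary \ref{corollary4.6} and Theorem \ref{theorem4.4} with no further argument. Your additional verification that the minimum distance of $\mathcal{C}_{h}^{\bot_{\sigma}}$ is independent of $h$ (via $\sigma(\mathcal{C}_{h})=\sigma(\mathcal{C})N$ with $N=M_{\tau}^{-1}\pi_{s}(M)M_{\tau}$ monomial) is a sound and worthwhile detail that the paper leaves implicit.
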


\begin{corollary}
Let $q>2$ and let $\mathcal{C}$ be an $[n,k,n-k+1]_{q}$ MDS linear code. Set $\sigma=(\tau,\pi_{s})\in\mathrm{SLAut}(\mathbb{F}_{q}^{n})$, where $\tau$ corresponds to a monomial matrix $M_{\tau}=DP_{\tau}\in\mathbb{F}_{q}^{n\times n}$ and $1\leq s\leq e$. Then the following statements hold.
\vspace{-5pt}
\begin{itemize}
\item [(1)] There exists an
$[[n,k-h,n-k+1;n-k-h]]_{q}$ MDS EAQECC for every integer $h$ with $0\leq h\leq \mathrm{dim}_{\mathbb{F}_{q}}(\mathrm{Hull}_{\sigma}(\mathcal{C}))$;

\vspace{-4pt}

\item [(2)] There exists an
$[[n,n-k-h,k+1;k-h]]_{q}$ MDS EAQECC for every integer $h$ with $0\leq h\leq \mathrm{dim}_{\mathbb{F}_{q}}(\mathrm{Hull}_{\sigma}(\mathcal{C}))$.
\end{itemize}
\end{corollary}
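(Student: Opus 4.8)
The plan is to deduce this corollary directly from Theorem \ref{theorem4.7}, so the only genuine work is to identify the minimum distance $d'$ of the relevant $\sigma$ dual code and to verify that the resulting EAQECCs are MDS. Applying Theorem \ref{theorem4.7} with $d=n-k+1$ immediately produces, for each integer $h$ with $0\le h\le \mathrm{dim}_{\mathbb{F}_{q}}(\mathrm{Hull}_{\sigma}(\mathcal{C}))$, a monomially equivalent $[n,k,n-k+1]_{q}$ code $\mathcal{C}_{h}$ together with an $[[n,k-h,n-k+1;n-k-h]]_{q}$ EAQECC from part (1) and an $[[n,n-k-h,d';k-h]]_{q}$ EAQECC from part (2), where $d'$ denotes the minimum distance of $\mathcal{C}_{h}^{\bot_{\sigma}}$.

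First I would pin down $d'$. Since $\mathcal{C}_{h}$ is monomially equivalent to the MDS code $\mathcal{C}$, it is itself an $[n,k,n-k+1]_{q}$ code and hence MDS. Recalling that $\mathcal{C}_{h}^{\bot_{\sigma}}=(\sigma(\mathcal{C}_{h}))^{\bot_{E}}$ and that $\sigma\in\mathrm{SLAut}(\mathbb{F}_{q}^{n})$ is a semilinear isometry, the image $\sigma(\mathcal{C}_{h})$ preserves all Hamming weights and is therefore again an $[n,k,n-k+1]_{q}$ MDS code. Because the Euclidean dual of an MDS code is MDS, $(\sigma(\mathcal{C}_{h}))^{\bot_{E}}$ is an $[n,n-k,k+1]_{q}$ MDS code, and consequently $d'=k+1$.

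Finally I would substitute $d=n-k+1$ and $d'=k+1$ to obtain the asserted parameters $[[n,k-h,n-k+1;n-k-h]]_{q}$ and $[[n,n-k-h,k+1;k-h]]_{q}$, and check that each attains the EA-quantum Singleton bound $K\le n-2(d-1)+c$ with equality. For statement (1) one computes $n-2\big((n-k+1)-1\big)+(n-k-h)=k-h$, and for statement (2) one computes $n-2\big((k+1)-1\big)+(k-h)=n-k-h$, so in both cases the dimension meets the bound and the codes are genuinely MDS EAQECCs. The only conceptual point, and the step I expect to require the most care, is the argument that the $\sigma$ dual of an MDS code is again MDS with distance exactly $k+1$; once $d'=k+1$ is established, everything else reduces to a routine substitution into Theorem \ref{theorem4.7} and an arithmetic verification of the Singleton bound.
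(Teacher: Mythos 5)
Your proposal is correct and follows exactly the route the paper intends: the corollary is stated without proof as an immediate consequence of Theorem \ref{theorem4.7}, obtained by substituting $d=n-k+1$ and observing that $\mathcal{C}_{h}^{\bot_{\sigma}}=(\sigma(\mathcal{C}_{h}))^{\bot_{E}}$ is the Euclidean dual of an MDS code (since semilinear isometries preserve weights and dimension), hence has $d'=k+1$. Your additional verification that both parameter sets meet the EA-Singleton bound with equality is a sensible sanity check and consistent with the paper's claim that the resulting codes are MDS EAQECCs.
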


Next, let us present a general construction method for deriving EAQECCs with flexible parameters from MP codes.

\begin{theorem}
Set $\sigma=(\tau,\pi_{s})\in\mathrm{SLAut}(\mathbb{F}_{q}^{kn})$ and $\widetilde{\sigma}=(\widetilde{\tau},\pi_{s})\in\mathrm{SLAut}(\mathbb{F}_{q}^{n})$,
where $1\leq s\leq e$, $\widetilde{\tau}$ corresponds to a $\widetilde{\tau}$-monomial matrix $M_{\widetilde{\tau}}\in\mathbb{F}_{q}^{n}$,
and $\tau$ corresponds to a $\tau$-monomial matrix $M_{\tau}=M_{\widehat{\tau}}\otimes M_{\widetilde{\tau}}\in\mathbb{F}_{q}^{kn}$ for $\widehat{\tau}$-monomial matrix $M_{\widehat{\tau}}\in\mathbb{F}_{q}^{k}$.
Let $\mathcal{C}(A)=[\mathcal{C}_{1},\mathcal{C}_{2},\ldots,\mathcal{C}_{k}]\cdot A$, where $A\in\mathbb{F}_{q}^{k\times k}$ and $\mathcal{C}_{i}$ is an $[n,t_{i},d_{i}]_{q}$ linear code whose $\widetilde{\sigma}$ dual code has minimum distance $d_{i}'$ for $i=1,2,\ldots,k$.
If $\pi_{s}(A)M_{\widehat{\tau}}A^{\top}$ is $\varrho$-monomial for some permutation $\varrho\in S_{k}$, then the following two statements hold.
\vspace{-5pt}
\begin{itemize}
\item [(1)] There exists a $q$-ary EAQECC $\mathcal{Q}_{1}: \big[\mspace{-3mu}\big[kn,\sum_{i=1}^{k}t_{i}-h,\geq\min\limits_{\scriptscriptstyle 1\leq i\leq k}\{D_{i}(A)d_{i}\};kn-\sum_{i=1}^{k}t_{i}-h\big]\mspace{-3mu}\big]_{q}$ for every integer $h$ with
$0\leq h\leq \sum_{i=1}^{k}\mathrm{dim}_{\mathbb{F}_{q}}(\mathcal{C}_{i}\cap\mathcal{C}_{\rho(i)}^{\bot_{\widetilde{\sigma}}})$.

\vspace{-4pt}

\item [(2)] There exists a $q$-ary EAQECC $\mathcal{Q}_{2}: \big[\mspace{-3mu}\big[kn,kn-\sum_{i=1}^{k}t_{i}-h,\geq\min\limits_{\scriptscriptstyle 1\leq i\leq k}\{D_{i}(A)d_{\varrho(i)}'\};\sum_{i=1}^{k}t_{i}-h\big]\mspace{-3mu}\big]_{q}$ for every integer $h$ with
$0\leq h\leq \sum_{i=1}^{k}\mathrm{dim}_{\mathbb{F}_{q}}(\mathcal{C}_{i}\cap\mathcal{C}_{\rho(i)}^{\bot_{\widetilde{\sigma}}})$.
\end{itemize}
\end{theorem}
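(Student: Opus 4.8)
The plan is to treat $\mathcal{C}(A)$ as a single $q$-ary linear code of length $kn$ with controlled parameters and then feed it directly into Theorem \ref{theorem4.7}, which already packages the two EAQECC families for a code together with its $\sigma$ dual. Accordingly, the first task is to assemble the three data that Theorem \ref{theorem4.7} consumes: the length, dimension and minimum distance of $\mathcal{C}(A)$; the dimension of its $\sigma$ hull; and the minimum distance of its $\sigma$ dual. Since each $\mathcal{C}_i$ is an $[n,t_i,d_i]_q$ code, the generator matrix $G$ in Eq. (\ref{equation3.1}) exhibits $\mathcal{C}(A)$ as a code of length $kn$ and dimension $\sum_{i=1}^{k}t_i$; here the rows of $G$ are independent because $A$ is invertible, which is forced by $\pi_{s}(A)M_{\widehat{\tau}}A^{\top}=D_{\varrho}P_{\varrho}$ being monomial (both $M_{\widehat\tau}$ and the right-hand side are invertible, so $\pi_s(A)$ and hence $A$ are invertible). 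The classical Blackmore--Norton minimum-distance estimate for matrix-product codes \cite{Blackmore2001Matrix-product} then gives $d(\mathcal{C}(A))\geq\min_{1\leq i\leq k}\{D_{i}(A)d_{i}\}$, matching the distance in $\mathcal{Q}_{1}$.

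Next I would read off the hull dimension. The hypotheses here are exactly those of Theorem \ref{theorem3.1} (the same $\sigma$, $\widetilde{\sigma}$, the factorization $M_{\tau}=M_{\widehat{\tau}}\otimes M_{\widetilde{\tau}}$, and the $\varrho$-monomial condition on $\pi_{s}(A)M_{\widehat{\tau}}A^{\top}$), so that theorem applies verbatim and yields
\begin{align*}
\mathrm{dim}_{\mathbb{F}_{q}}(\mathrm{Hull}_{\sigma}(\mathcal{C}(A)))=\sum_{i=1}^{k}\mathrm{dim}_{\mathbb{F}_{q}}(\mathcal{C}_{i}\cap\mathcal{C}_{\varrho(i)}^{\bot_{\widetilde{\sigma}}}).
\end{align*}
This already fixes the admissible range $0\leq h\leq\sum_{i=1}^{k}\mathrm{dim}_{\mathbb{F}_{q}}(\mathcal{C}_{i}\cap\mathcal{C}_{\varrho(i)}^{\bot_{\widetilde{\sigma}}})$ appearing in both statements.

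The remaining ingredient, and the step I expect to be the main obstacle, is to realize the $\sigma$ dual $\mathcal{C}(A)^{\bot_{\sigma}}$ as a matrix-product code so that its minimum distance can be bounded with the \emph{same} row-code distances $D_{i}(A)$. Here I would invoke the structural description of $\sigma$ duals of MP codes underlying Remark 4.5 of \cite{Cao2024Onthe} (already used for the dimension count in the proof of Theorem \ref{theorem3.2}): under the $\varrho$-monomial condition, $\mathcal{C}(A)^{\bot_{\sigma}}$ equals the matrix-product code $[\mathcal{C}_{\varrho(1)}^{\bot_{\widetilde{\sigma}}},\mathcal{C}_{\varrho(2)}^{\bot_{\widetilde{\sigma}}},\ldots,\mathcal{C}_{\varrho(k)}^{\bot_{\widetilde{\sigma}}}]\cdot A$ with defining matrix $A$ itself. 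The delicate point is to confirm that it is again $A$ (and not an $(A^{-1})^{\top}$-type matrix) that governs the dual, which is precisely what $\pi_{s}(A)M_{\widehat{\tau}}A^{\top}=D_{\varrho}P_{\varrho}$ guarantees; I would verify this by checking that the generator matrix built from the blocks $\mathbf{a}_{i}\otimes H_{\varrho(i)}'$, where $H_{j}'=\pi_{s}(H_{j})(M_{\widetilde{\tau}}^{-1})^{\top}$ generates $\mathcal{C}_{j}^{\bot_{\widetilde{\sigma}}}$, is $\sigma$-orthogonal to $G$ (using the block computation in the proof of Theorem \ref{theorem3.1}), and then matching dimensions via $\mathrm{dim}_{\mathbb{F}_{q}}(\mathcal{C}(A)^{\bot_{\sigma}})=kn-\sum_{i=1}^{k}t_{i}$. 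Because the defining matrix is once more $A$, the distances $D_{i}(A)$ of the codes spanned by the first $i$ rows are unchanged, so applying the Blackmore--Norton bound to this dual MP code gives $d'=d(\mathcal{C}(A)^{\bot_{\sigma}})\geq\min_{1\leq i\leq k}\{D_{i}(A)d_{\varrho(i)}'\}$.

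Finally, I would substitute these data into Theorem \ref{theorem4.7}, taking $n\mapsto kn$, $k\mapsto\sum_{i=1}^{k}t_{i}$, $d\mapsto\min_{i}\{D_{i}(A)d_{i}\}$, $d'\mapsto\min_{i}\{D_{i}(A)d_{\varrho(i)}'\}$, and $\mathrm{dim}_{\mathbb{F}_{q}}(\mathrm{Hull}_{\sigma}(\mathcal{C}))\mapsto\sum_{i=1}^{k}\mathrm{dim}_{\mathbb{F}_{q}}(\mathcal{C}_{i}\cap\mathcal{C}_{\varrho(i)}^{\bot_{\widetilde{\sigma}}})$. Part (1) of Theorem \ref{theorem4.7} then produces $\mathcal{Q}_{1}$ and part (2) produces $\mathcal{Q}_{2}$, for every $h$ in the stated range; the inequalities on the minimum distances of $\mathcal{Q}_{1}$ and $\mathcal{Q}_{2}$ simply reflect that the two MP distances entering Theorem \ref{theorem4.7} are lower bounds rather than exact values. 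The only hypothesis I must keep track of is $q>2$, which is inherited through Theorem \ref{theorem4.7} from the monomial-equivalence construction used to realize every intermediate hull dimension $h$.
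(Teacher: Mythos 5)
Your proposal follows essentially the same route as the paper: apply Theorem \ref{theorem3.1} to get the parameters and hull dimension of $\mathcal{C}(A)$, identify $\mathcal{C}(A)^{\bot_{\sigma}}$ as the matrix-product code $[\mathcal{C}_{\varrho(1)}^{\bot_{\widetilde{\sigma}}},\ldots,\mathcal{C}_{\varrho(k)}^{\bot_{\widetilde{\sigma}}}]\cdot A$ via Remark 4.5 2) of \cite{Cao2024Onthe} (the paper does this through the identity $(M_{\widehat{\tau}}^{\top}\pi_{s}(A)^{\top})^{-1}=D_{\varrho}^{-1}P_{\varrho}A$ rather than your direct orthogonality check, but the content is the same), and then feed both codes into Corollary \ref{corollary4.6} and Theorem \ref{theorem4.7}. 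Your closing remark about the hypothesis $q>2$ being inherited from Theorem \ref{theorem4.7} is a fair observation, since the theorem as stated omits it even though its proof relies on that assumption.
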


\begin{proof}
(1) By Theorem \ref{theorem3.1}, we know that $\mathcal{C}(A)$ is a
$\big[kn,\sum_{i=1}^{k}t_{i},\geq\min\limits_{\scriptscriptstyle 1\leq i\leq k}\{D_{i}(A)d_{i}\}\big]_{q}$ linear code with
$\mathrm{dim}_{\mathbb{F}_{q}}(\mathrm{Hull}_{\sigma}(\mathcal{C}(A)))
=\sum_{i=1}^{k}\mathrm{dim}_{\mathbb{F}_{q}}(\mathcal{C}_{i}\cap\mathcal{C}_{\rho(i)}^{\bot_{\widetilde{\sigma}}})$.
Then by Corollary \ref{corollary4.6}, there exists a monomially equivalent
$\big[kn,\sum_{i=1}^{k}t_{i},\geq\min\limits_{\scriptscriptstyle 1\leq i\leq k}\{D_{i}(A)d_{i}\}\big]_{q}$ linear code $\mathcal{C}_{h}$
with $\mathrm{dim}_{\mathbb{F}_{q}}(\mathrm{Hull}_{\sigma}(\mathcal{C}_{h}))=h$ for every integer $h$ with
$0\leq h\leq \mathrm{dim}_{\mathbb{F}_{q}}(\mathrm{Hull}_{\sigma}(\mathcal{C}(A)))
=\sum_{i=1}^{k}\mathrm{dim}_{\mathbb{F}_{q}}(\mathcal{C}_{i}\cap\mathcal{C}_{\rho(i)}^{\bot_{\widetilde{\sigma}}})$.
By Theorem \ref{theorem4.7}, we obtain a $q$-ary EAQECC $\mathcal{Q}_{1}: \big[\mspace{-3mu}\big[kn,\sum_{i=1}^{k}t_{i}-h,\geq\min\limits_{\scriptscriptstyle 1\leq i\leq k}\{D_{i}(A)d_{i}\};kn-\sum_{i=1}^{k}t_{i}-h\big]\mspace{-3mu}\big]_{q}$ for every integer $h$ with
$0\leq h\leq\sum_{i=1}^{k}\mathrm{dim}_{\mathbb{F}_{q}}(\mathcal{C}_{i}\cap\mathcal{C}_{\rho(i)}^{\bot_{\widetilde{\sigma}}})$.
This completes the proof of statement (1).

\vspace{6pt}

(2) As $\pi_{s}(A)M_{\widehat{\tau}}A^{\top}$ is $\varrho$-monomial for some permutation $\varrho\in S_{k}$, it can be written as
$\pi_{s}(A)M_{\widehat{\tau}}A^{\top}=D_{\varrho}P_{\varrho}$, where $D_{\varrho}$ is invertible diagonal and $P_{\varrho}$ is a permutation matrix under $\varrho$. Then $(M_{\widehat{\tau}}^{\top}\pi_{s}(A)^{\top})^{-1}=D_{\varrho}^{-1}P_{\varrho}A$.
Combining this with \cite[Remark 4.5 2)]{Cao2024Onthe}, we obtain
\begin{align*}
\mathcal{C}(A)^{\bot_{\sigma}}
&=[\mathcal{C}_{1}^{\bot_{\widetilde{\sigma}}},\mathcal{C}_{2}^{\bot_{\widetilde{\sigma}}},\ldots,\mathcal{C}_{k}^{\bot_{\widetilde{\sigma}}}]\cdot (M_{\widehat{\tau}}^{\top}\pi_{s}(A)^{\top})^{-1}\\
&=[\mathcal{C}_{1}^{\bot_{\widetilde{\sigma}}},\mathcal{C}_{2}^{\bot_{\widetilde{\sigma}}},\ldots,\mathcal{C}_{k}^{\bot_{\widetilde{\sigma}}}]\cdot D_{\varrho}^{-1}P_{\varrho}A\\
&=[\mathcal{C}_{1}^{\bot_{\widetilde{\sigma}}},\mathcal{C}_{2}^{\bot_{\widetilde{\sigma}}},\ldots,\mathcal{C}_{k}^{\bot_{\widetilde{\sigma}}}]\cdot P_{\varrho}A\\
&=[\mathcal{C}_{\varrho(1)}^{\bot_{\widetilde{\sigma}}},\mathcal{C}_{\varrho(2)}^{\bot_{\widetilde{\sigma}}},\ldots,
\mathcal{C}_{\varrho(k)}^{\bot_{\widetilde{\sigma}}}]\cdot A.
\end{align*}
By Corollary \ref{corollary3.2}, we obtain that $\mathcal{C}(A)^{\bot_{\sigma}}$ is a
$\big[kn,kn-\sum_{i=1}^{k}t_{i},\geq\min\limits_{\scriptscriptstyle 1\leq i\leq k}\{D_{i}(A)d_{\varrho(i)}'\}\big]_{q}$ linear code with
\vspace{-4pt}
\begin{align*}
\mathrm{dim}_{\mathbb{F}_{q}}(\mathrm{Hull}_{\sigma}(\mathcal{C}(A)^{\bot_{\sigma}}))
=\mathrm{dim}_{\mathbb{F}_{q}}(\mathrm{Hull}_{\sigma}(\mathcal{C}(A)))
=\sum_{i=1}^{k}\mathrm{dim}_{\mathbb{F}_{q}}(\mathcal{C}_{i}\cap\mathcal{C}_{\rho(i)}^{\bot_{\widetilde{\sigma}}}).
\end{align*}
By Corollary \ref{corollary4.6}, there exists a monomially equivalent
$\big[kn,kn-\sum_{i=1}^{k}t_{i},\geq\min\limits_{\scriptscriptstyle 1\leq i\leq k}\{D_{i}(A)d_{\varrho(i)}'\}\big]_{q}$ linear code $\mathcal{C}_{h}$ with $\mathrm{dim}_{\mathbb{F}_{q}}(\mathrm{Hull}_{\sigma}(\mathcal{C}_{h}))=h$ for every integer $h$ with
$0\leq h\leq \mathrm{dim}_{\mathbb{F}_{q}}(\mathrm{Hull}_{\sigma}(\mathcal{C}(A)^{\bot_{\sigma}}))
=\sum_{i=1}^{k}\mathrm{dim}_{\mathbb{F}_{q}}(\mathcal{C}_{i}\cap\mathcal{C}_{\rho(i)}^{\bot_{\widetilde{\sigma}}})$.
Using Theorem \ref{theorem4.7}, we obtain a $q$-ary EAQECC $\mathcal{Q}_{2}: \big[\mspace{-3mu}\big[kn,kn-\sum_{i=1}^{k}t_{i}-h,\geq\min\limits_{\scriptscriptstyle 1\leq i\leq k}\{D_{i}(A)d_{\varrho(i)}'\};\sum_{i=1}^{k}t_{i}-h\big]\mspace{-3mu}\big]_{q}$ for every integer $h$ with
$0\leq h\leq \sum_{i=1}^{k}\mathrm{dim}_{\mathbb{F}_{q}}(\mathcal{C}_{i}\cap\mathcal{C}_{\rho(i)}^{\bot_{\widetilde{\sigma}}})$.
This completes the proof of statement (2). $\hfill\square$
\end{proof}

\section*{Data availability statement}

No data was used for the research described in this paper.

\footnotesize{
\bibliographystyle{plain}
\phantomsection
\addcontentsline{toc}{section}{References}
\bibliography{sigma}
}

\end{document}